\renewcommand\url[1]{\href{#1}{\tt #1}}
\newcommand\tool[1]{{\scshape #1}\xspace}
\newcommand\moins{\setminus}
\newcommand\furl[1]{\footnote{\url{http://#1}}}
\newcommand\hide[1]{}
\newcommand\dom{\mr{dom}}
\newcommand\FV{\mr{FV}}
\newcommand\SN{\mr{SN}}
\renewcommand\prod{\mr{prod}}
\renewcommand\a{\rightarrow}
\newcommand\ra{\longrightarrow}
\newcommand\A{\Rightarrow}
\newcommand\la{\leftarrow}
\newcommand\ad{\downarrow}
\renewcommand\to{\mapsto}
\newcommand\nf[1]{{{#1}\!\!\ad}}
\newcommand\ab{\a_\b}
\newcommand\ar{\a_\cR}
\newcommand\I[1]{\llbracket{#1}\rrbracket}
\newcommand\all{\forall}
\renewcommand\th{\vdash}
\newcommand\sle{\subseteq}
\newcommand\tgt{\rhd}
\newcommand\al{\alpha}
\renewcommand\b{\beta}
\newcommand\g{\gamma}
\newcommand\G{\Gamma}
\newcommand\D{\Delta}
\renewcommand\t{\theta}
\newcommand\T{\Theta}
\renewcommand\l{\lambda}
\newcommand\s{\sigma}
\renewcommand\S{\Sigma}
\newcommand\w{\omega}
\newcommand\mc{\mathcal}
\newcommand\mr{\mathrm}
\newcommand\mb{\mathbb}
\newcommand\bB{\mb{B}}
\newcommand\bC{\mb{C}}
\newcommand\bF{\mb{F}}
\newcommand\bI{\mb{I}}
\newcommand\bK{\mb{K}}
\newcommand\bL{\mb{L}}
\newcommand\bN{\mb{N}}
\newcommand\bP{\mb{P}}
\newcommand\bS{\mb{S}}
\newcommand\bT{\mb{T}}
\newcommand\bV{\mb{V}}
\newcommand\cD{\mc{D}}
\newcommand\cF{\mc{F}}
\newcommand\cG{\mc{G}}
\newcommand\cI{\mc{I}}
\newcommand\cP{\mc{P}}
\newcommand\cR{\mc{R}}
\newcommand\vl{{\vec{l}}}
\newcommand\vm{{\vec{m}}}
\newcommand\vt{{\vec{t}}}
\newcommand\vu{{\vec{u}}}
\newcommand\vv{{\vec{v}}}
\newcommand\vx{{\vec{x}}}
\newcommand\vy{{\vec{y}}}
\newcommand\vA{{\vec{A}}}
\newcommand\vT{{\vec{T}}}
\newcommand\vU{{\vec{U}}}
\newenvironment{rul}
  {$\begin{array}{rcl}}
  {\end{array}$}
\newenvironment{rew}[1][~~\a~~]
  {$\begin{array}{r@{#1}l}}
  {\end{array}$}
\newenvironment{rewc}[1][~~\a~~]
  {\begin{center}\begin{rew}[#1]}
  {\end{rew}\end{center}}
\renewcommand\prod[3]{\all #1:#2,#3}
\newcommand\abs[3]{\l #1:#2,#3}
\newcommand\type{\text{\footnotesize$\mr{TYPE}$}}
\newcommand\kind{\text{\footnotesize$\mr{KIND}$}}
\renewcommand\ra\a
\renewcommand\arg{\mr{arg}}
\newcommand\red[1]{\a\!\!(#1)}
\newcommand\call{\tilde{>}}
\newcommand\<\leq
\DeclareMathOperator\arit{ar}
\newcommand\product[3]{\Pi #1\hspace{-0.15em}\in \hspace{-0.15em} #2.\,#3}
\definecolor{green}{RGB}{0,130,0}
\definecolor{grispale}{RGB}{240,240,240}
\lstdefinelanguage{Dedukti}
{
  alsoletter={=->:\#\*},
  keywords={symbol,rule,infix,TYPE,set},
  delim=[s][\color{brown}]{\[}{\]},
  comment=[n]{(;}{;)},
  string=[b]{"},
  commentstyle=\color{red},
  showstringspaces=false
}
\definecolor{darkorange}{RGB}{180,100,20}
\title{Dependency Pairs Termination in Dependent Type Theory Modulo Rewriting}
\author{Fr\'ed\'eric Blanqui$^{1,2}$ \and Guillaume Genestier$^{2,3}$ \and Olivier Hermant$^3$}{$^1$ INRIA \and $^2$ LSV, ENS Paris-Saclay, CNRS, Universit\'e Paris-Saclay \and $^3$ MINES ParisTech, PSL University}{}{}{}
\authorrunning{F. Blanqui, G. Genestier and O. Hermant}
\keywords{Termination, Higher-Order Rewriting,
Dependent Types, Dependency Pairs}
\begin{document}

\maketitle

\begin{abstract}
  Dependency pairs are a key concept at the core of modern automated
  termination provers for first-order term rewriting systems. In this
  paper, we introduce an extension of this technique for a large class
  of dependently-typed higher-order rewriting systems. This
  extends previous results by Wahlstedt on the one hand and the first
  author on the other hand to strong normalization and non-orthogonal
  rewriting systems. This new criterion is implemented in the
  type-checker \tool{Dedukti}.
\end{abstract}

\section{Introduction}

Termination, that is, the absence of infinite computations, is an
important problem in software verification, as well as in logic.
In logic, it is often used to prove cut elimination and
consistency. In automated theorem provers and proof assistants, it is
often used (together with confluence) to check decidability of
equational theories and type-checking algorithms.

This paper introduces a new termination criterion for a large class of
programs whose operational semantics can be described by higher-order
rewriting rules \cite{terese03book} typable in the $\l\Pi$-calculus
modulo rewriting ($\l\Pi/\cR$ for short). $\l\Pi/\cR$ is a system of
dependent types where types are identified modulo the $\b$-reduction
of $\l$-calculus and a set $\cR$ of rewriting rules given by the user to
define not only functions but also types. It extends
Barendregt's Pure Type System (PTS) $\l P$ \cite{barendregt92chapter},
the logical framework LF \cite{harper93jacm} and Martin-L\"of's type
theory\hide{\cite{martinlof84book}}. It can encode any
functional PTS like System F or the Calculus of Constructions
\cite{cousineau07tlca}.\hide{assaf15phd} \hide{This makes $\l\Pi/\cR$ a good
  candidate for a logical framework and a platform for proof
  translation and interoperability \cite{assaf16draft}.}

Dependent types, introduced by de Bruijn in \tool{Automath}\hide{\cite{debruijn68sad}}, subsume
generalized algebraic data types (GADT)\hide{\cite{xi03popl}} used in some functional programming languages.
They are at the core of many proof
assistants and programming languages: \tool{Coq}, \tool{Twelf},
\tool{Agda}, \tool{Lean}, \tool{Idris}, \ldots

Our criterion has been implemented in \href{https://deducteam.github.io/}{\tool{Dedukti}}\hide{\cite{dedukti}}, a
type-checker for $\l\Pi/\cR$ that we will use in our examples. The
code is available in \cite{sizechangetool} and could be easily adapted to a
subset of other languages like \tool{Agda}. As far as we
know, this tool is the first one to automatically check termination in
$\l\Pi/\cR$, which includes both higher-order rewriting and dependent
types.

This criterion is based on dependency pairs,
an important concept in the termination of first-order term rewriting
systems. It generalizes the
notion of recursive call in first-order functional
programs to rewriting. Namely, the dependency pairs of a rewriting rule
$f(l_1,\ldots,l_p)\a r$ are the pairs
$(f(l_1,\ldots,l_p),g(m_1,\ldots,m_q))$ such that $g(m_1,\ldots,m_q)$
is a subterm of $r$ and $g$ is a function symbol defined by some
rewriting rules. Dependency pairs have been introduced by Arts and Giesl
\cite{arts00tcs} and have evolved into a general framework for
termination \cite{giesl04lpar}. It is now at the heart of many
state-of-the-art automated termination provers for first-order rewriting systems and \tool{Haskell}, \tool{Java} or \tool{C} programs.

\hide{Indeed, Arts and Giesl
proved that a rewriting relation terminates if and only if there is no
infinite sequence of dependency pairs interleaved with reductions in
the arguments. In a first-order functional setting, it amounts to saying
that there is no infinite sequence of function calls.}

Dependency pairs have been extended to different simply-typed settings for
higher-order rewriting: Combinatory Reduction Systems
\cite{klop93tcs}\hide{kop12phd} and Higher-order Rewriting Systems
\cite{mayr98tcs}, with two different approaches: dynamic
dependency pairs include variable applications \cite{kop12phd}, while static
dependency pairs exclude them by slightly restricting the class of
systems that can be considered \cite{kusakari07aaecc}\hide{suzuki11pro}. Here, we use the static approach.

In \cite{wahlstedt07phd}, Wahlstedt considered a system slightly less general than $\l\Pi/\cR$ for which he provided conditions that imply the weak
normalization, that is, the existence of a finite reduction
to normal form.
In his system, $\cR$ uses matching on constructors only, like
in the languages \tool{OCaml} or \tool{Haskell}. In this case, $\cR$ is orthogonal: rules
are left-linear (no variable occurs twice in a left-hand side) and
have no critical pairs (no two rule left-hand side instances
overlap).
Wahlstedt's proof proceeds in two modular steps. First, he proves that
typable terms have a normal form if there is no infinite sequence of
function calls. Second, he proves that there is no infinite sequence
of function calls if $\cR$ satisfies Lee, Jones and Ben-Amram's
size-change termination criterion (SCT) \cite{lee01popl}.

In this paper, we extend Wahlstedt's results in two directions. First, we prove a
stronger normalization property: the absence of infinite
reductions. Second, we assume that $\cR$ is locally confluent,
a much weaker condition than orthogonality: rules can
be non-left-linear and have joinable critical pairs.

In \cite{blanqui05mscs}, the first author developed a termination
criterion for a calculus slightly more general than $\l\Pi/\cR$, based
on the notion of computability closure, assuming that type-level
rules are orthogonal. The computability closure of a term
$f(l_1,\ldots,l_p)$ is a set of terms that terminate whenever
$l_1,\ldots,l_p$ terminate. It is defined inductively thanks to
deduction rules preserving this property, using a precedence and a fixed well-founded ordering for dealing with function calls. Termination can then be
enforced by requiring each rule right-hand side to belong to the
computability closure of its corresponding left-hand side.

\hide{Function
calls are added in the computability closure by using a fixed
well-founded quasi-ordering. In a way similar to the recursive path
ordering \cite{dershowitz79focs}, a function call $g(m_1,\ldots,m_q)$
is added in the computability closure of $f(l_1,\ldots,l_p)$ if
$m_1,\ldots,m_q$ are in the computability closure and, either $g$ is
smaller than $f$ in a well-founded precedence, or $g$ is equivalent to
$f$ but $m_1,\ldots,m_q$ is smaller than $l_1,\ldots,l_p$ in the
multiset or lexicographic extension of some fixed well-founded ordering.
}

We extend this work as well by replacing that fixed ordering by the
dependency pair relation. In \cite{blanqui05mscs}, there must be a
decrease in every function call.
Using dependency pairs
allows one to have non-strict decreases. Then, following Wahlstedt, SCT can be used to
enforce the absence of infinite sequence of dependency pairs. But other
criteria have been developed for this purpose that could be adapted to $\l\Pi/\cR$.

\hide{
However, because in our setting typing depends on rewriting,
some dependency pairs may arise from
typing. For instance, if $f$ takes an argument of type $Ft$, then we will have dependency pairs between $f$ and $F$, even though the rules
of $f$ do not use $F$ itself. Our extended notion of dependency pair
boils down to the usual one when one restricts himself to simply-typed
rewriting rules though.
}

\subsubsection*{Outline}
The main result is Theorem \ref{thm-dp} stating that, for a large
class of rewriting systems $\cR$, the combination of $\b$
and $\cR$ is strongly normalizing on terms typable in $\l\Pi/\cR$ if,
roughly speaking, there is no infinite sequence of dependency pairs.

The proof involves two steps.
First, after recalling the terms and types
of $\l\Pi/\cR$ in Section \ref{sec-terms},
we introduce in Section
\ref{sec-interp} a model of this calculus based on
Girard's reducibility candidates \cite{girard88book}, and prove that every typable term is strongly normalizing if every symbol of the signature is in the interpretation of its type (Adequacy lemma).
Second, in Section \ref{sec-dp-thm}, we introduce our notion of dependency pair and prove that every symbol of the signature is in the interpretation of its type if there is no infinite sequence of dependency pairs.

In order to show the usefulness of this result, we give simple criteria for checking the conditions of the theorem.
In Section \ref{sec-acc}, we show that \emph{plain function passing} systems belong to the class of systems that we consider.
And in Section \ref{sec-sct}, we show how to use
size-change termination to obtain the termination of the dependency
pair relation.

Finally, in Section \ref{sec-comp} we compare our criterion with other criteria
and tools and, in Section \ref{sec-conclu}, we summarize our
results and give some hints on possible extensions.

For lack of space, some proofs are given in an appendix at the end of the paper.

\hide{
The paper is self-contained except for a few meta-theoretical
properties of $\l\Pi/\cR$ taken from \cite{blanqui01phd}.
The proofs of some lemmas are given in an appendix at the end of the paper.
Section
\ref{sec-interp} is the most technical part of the paper and can be
skipped at first. It introduces a new interpretation of types
(Definition \ref{def-interp}), much simpler than the one of Wahlstedt
\cite[Definition 3.2.3]{wahlstedt07phd}. In particular, it does not
use transfinite ordinal theory but a powerful fixpoint theorem instead.
}


\section{Terms and types}
\label{sec-terms}

The set $\bT$ of terms of $\l\Pi/\cR$ is the same as those of Barendregt's $\l P$ \cite{barendregt92chapter}:
\begin{center}$t\in\bT = s\in\bS\mid x\in\bV\mid f\in\bF\mid\prod xtt\mid tt\mid\abs xtt$\end{center}
where $\bS=\{\type,\kind\}$ is the set of sorts\footnote{Sorts refer here to the notion of sort in Pure Type Systems, not the one used in some first-order settings.},
$\bV$ is an infinite set
of variables and $\bF$ is a set of function symbols, so that $\bS$,
$\bV$ and $\bF$ are pairwise disjoint.

Furthermore, we assume given a set $\cR$ of rules $l\a r$ such that
$\FV(r)\sle\FV(l)$ and
$l$ is of the form $f\vl$.
A symbol $f$ is said to be defined if there is a
rule of the form $f\vl\a r$. In this paper, we are interested in the
termination of
\begin{center}${\a}={\ab\cup\ar}$\end{center}
where $\ab$ is the $\b$-reduction of $\l$-calculus and
$\ar$ is the smallest
relation containing $\cR$ and closed by substitution and context:
we
consider rewriting with syntactic matching only. Following
\cite{blanqui16tcs}, it should however be possible to extend the
present results to rewriting with matching modulo $\b\eta$ or some
equational theory.
Let $\SN$ be the set of terminating terms and, given a term $t$, let $\red{t}=\{u\in\mb{T}\mid t\a u\}$ be the set of
immediate reducts of $t$.

A typing environment $\G$ is a (possibly empty) sequence
$x_1:T_1,\ldots,x_n:T_n$ of pairs of variables and terms, where the variables are distinct,
written $\vx:\vT$ for short. Given an environment $\G=\vx:\vT$ and a term $U$, let
$\all\G,U$ be $\prod\vx\vT U$.
The product arity $\arit(T)$ of a term $T$ is the
integer $n\in\bN$ such that $T=\prod{x_1}{T_1}\ldots\prod{x_n}{T_n}U$ and $U$ is
not a product. Let $\vt$ denote a possibly empty sequence of terms $t_1,\ldots,t_n$
of length $|\vt|=n$, and $\FV(t)$ be the set of free variables of $t$.

For each $f\in\bF$, we assume given a term $\T_f$ and a
sort $s_f$, and let $\G_f$ be the environment such that
$\T_f=\all\G_f,U$ and $|\G_f|=\arit(\T_f)$.

The application of a substitution $\s$ to a term $t$ is written $t\s$.
Given a substitution $\s$,
let $\dom(\s)=\{x|x\s\neq x\}$, $\FV(\s)=\bigcup_{x\in\dom(\s)}\FV(x\s)$ and
$[x\to a,\s]$ ($[x\to a]$ if $\s$ is the identity) be the substitution $\{(x,a)\}\cup\{(y,b)\in\s\mid
y\neq x\}$.
Given another substitution $\s'$,
let $\s\a\s'$ if there is $x$ such that $x\s\a x\s'$ and, for all $y\neq
x$, $y\s=y\s'$.

The typing rules of $\l\Pi/\cR$, in Figure
\ref{fig-typ}, add to those of $\l P$ the rule
(fun) similar to (var). Moreover, (conv) uses $\ad$ instead of
$\ad_\b$, where ${\ad}={\a^*{}^*\hspace{-3pt}\la}$ is the joinability
relation and $\a^*$ the reflexive and transitive closure of $\a$. We
say that $t$ has type $T$ in $\G$ if $\G\th t:T$ is derivable. A
substitution $\s$ is well-typed from $\D$ to $\G$, written
$\G\th\s:\D$, if, for all $(x:T)\in\D$, $\G\th x\s:T\s$ holds.

The word ``type'' is used to denote a term occurring at the right-hand side of a colon in a typing judgment (and we usually use capital letters for types). 
Hence, $\kind$ is the type of $\type$, $\T_f$ is the type of $f$,
and $s_f$ is the type of $\T_f$.
Common data types like natural numbers $\bN$ are usually declared in $\l\Pi$ as function symbols of type $\type$: $\T_\bN=\type$ and $s_\bN=\kind$.

The dependent product $\prod xAB$ generalizes the arrow type $A\A B$ of simply-typed $\l$-calculus: it is the type of
functions taking an argument $x$ of type $A$ and returning a term
whose type $B$ may depend on $x$.
If $B$ does not depend on $x$, we sometimes simply write $A\A B$.

Typing induces a hierarchy on terms \cite[Lemma
  47]{blanqui01phd}. At the top, there is the sort
$\kind$ that is not typable. Then, comes the class $\bK$ of kinds, whose
type is $\kind$: $K=\type\mid\prod xtK$ where $t\in\bT$. Then, comes the class of predicates,
whose types are kinds. Finally, at the bottom lie
(proof) objects whose types are predicates.

\begin{figure}\caption{Typing rules of $\l\Pi/\cR$\label{fig-typ}}
\begin{center}
  \begin{tabular}{rc}
  (ax) & $\cfrac{}{\th\type:\kind}$\\[3mm]

  (var) & $\cfrac{\G\th A:s\quad x\notin\dom(\G)}{\G,x:A\th x:A}$\\[3mm]

  (weak) & $\cfrac{\G\th A:s\quad \G\th b:B\quad x\notin\dom(\G)}{\G,x:A\th b:B}$\\[3mm]

  (prod) & $\cfrac{\G\th A:\type\quad \G,x:A\th B:s}{\G\th\prod xAB:s}$\\
  \end{tabular}
  \begin{tabular}{rc}
  (abs) & $\cfrac{\G,x:A\th b:B\quad \G\th\prod xAB:s}{\G\th\l x:A.b:\prod xAB}$\\[3mm]

  (app) & $\cfrac{\G\th t:\prod xAB\quad \G\th a:A}{\G\th ta:B[x\to a]}$\\[3mm]

  (conv) & $\cfrac{\G\th a:A\quad A\ad B\quad \G\th B:s}{\G\th a:B}$\\[3mm]

  (fun) & $\cfrac{\th\T_f:s_f}{\th f:\T_f}$\\
  \end{tabular}
\end{center}
\end{figure}

\begin{example}[Filter function on dependent lists]\label{expl-list-poly}
To illustrate the kind of systems we consider, we give an extensive
example in the new \tool{Dedukti} syntax combining type-level
rewriting rules ({\tt El} converts datatype codes into \tool{Dedukti}
types), dependent types ($\bL$ is the polymorphic type of lists
parameterized with their length), higher-order variables ({\tt fil} is
a function filtering elements out of a list along a boolean function
{\tt f}), and matching on defined function symbols ({\tt fil} can
match a list defined by concatenation). Note that this example cannot be represented in
\tool{Coq} or \tool{Agda} because of the rules using matching on {\tt
  app}. And its termination can be handled neither by
\cite{wahlstedt07phd} nor by \cite{blanqui05mscs} because the system
is not orthogonal and has no strict decrease in every recursive
call. It can however be handled by our new termination criterion and
its implementation \cite{sizechangetool}.
For readability, we removed the {\tt \&} which are used to identify pattern variables in the rewriting rules.
 \begin{lstlisting}[mathescape=true]
symbol Set: TYPE     symbol arrow: Set $\A$ Set $\A$ Set

symbol El: Set $\A$ TYPE          rule El (arrow a b) $\alp$ El a $\A$ El b

symbol Bool: TYPE      symbol true: Bool     symbol false: Bool
symbol Nat: TYPE       symbol zero: Nat      symbol s: Nat $\A$ Nat

symbol plus: Nat $\A$ Nat $\A $Nat        set infix 1 "+" $\coloneqq$ plus
  rule zero + q $\alp$ q               rule (s p) + q $\alp$ s (p + q)

symbol List: Set $\A$ Nat $\A$ TYPE
  symbol nil: $\all$a, List a zero
  symbol cons:$\all$a, El a $\A$ $\all$p, List a p $\A$ List a (s p)

symbol app: $\all$a p, List a p $\A$ $\all$q, List a q $\A$ List a (p+q)
  rule app a _ (nil _)        q m $\alp$ m
  rule app a _ (cons _ x p l) q m $\alp$ cons a x (p+q) (app a p l q m)

symbol len_fil: $\all$a, (El a $\A$ Bool) $\A$ $\all$p, List a p $\A$ Nat
symbol len_fil_aux: Bool $\A$ $\all$a, (El a $\A$ Bool) $\A$ $\all$p, List a p $\A$ Nat
  rule len_fil a f _ (nil _)         $\alp$ zero
  rule len_fil a f _ (cons _ x p l)  $\alp$ len_fil_aux (f x) a f p l
  rule len_fil a f _ (app _ p l q m)
       $\alp$ (len_fil a f p l) + (len_fil a f q m)
  rule len_fil_aux true  a f p l $\alp$ s (len_fil a f p l)
  rule len_fil_aux false a f p l $\alp$ len_fil a f p l

symbol fil: $\all$a f p l, List a (len_fil a f p l)
symbol fil_aux: $\all$b a f, El a $\A$ $\all$p l, List a (len_fil_aux b a f p l)
  rule fil a f _ (nil _)         $\alp$ nil a
  rule fil a f _ (cons _ x p l)  $\alp$ fil_aux (f x) a f x p l
  rule fil a f _ (app _ p l q m)
       $\alp$ app a (len_fil a f p l) (fil a f p l)
                (len_fil a f q m) (fil a f q m)
  rule fil_aux false a f x p l $\alp$ fil a f p l
  rule fil_aux true  a f x p l
       $\alp$ cons a x (len_fil a f p l) (fil a f p l)
\end{lstlisting}

\hide{Note that the rules of {\tt +} are required for the rules of {\tt
    app} to preserves typing.}
\end{example}

\noindent{\bf Assumptions:} Throughout the paper, we assume that $\a$ is
locally confluent (${\la\a}\sle{\ad}$) and preserves typing (for all
$\G$, $A$, $t$ and $u$, if $\G\th t:A$ and $t\a u$, then $\G\th u:A$).

Note that local confluence implies that every $t\in\SN$ has a unique
normal form $\nf{t}$\hide{\cite{newman42am}}.

These assumptions are used in the interpretation of types (Definition
\ref{def-interp}) and the adequacy lemma (Lemma \ref{lem-comp}). Both
properties are undecidable in general. For confluence, \tool{Dedukti}
can call confluence checkers that understand the HRS format of the \href{http://project-coco.uibk.ac.at/}{confluence competition}. For
preservation of typing by reduction, it implements an heuristic
\cite{saillard15phd}.

\hide{See \cite{barbanera97jfp,blanqui05mscs,saillard15phd} for
  sufficient conditions.}

\hide{
Note that, when rules contain no abstraction and no variable
applications, then ${\a}={{\ab}\cup{\ar}}$ is locally confluent on
$\bT$ iff $\ar$ is locally confluent on first-order terms \cite[Lemma
  64]{blanqui01phd}.
}

\hide{
\begin{lemma}
$\G\th T:\kind$ iff $T\in\bK$ and $T$ is typable.
\end{lemma}

\begin{proof}
  We first prove that, if $\G\th t:T$ and $T=\kind$, then $T\in\bK$, by
  induction on $\G\th t:T$. We only detail important cases:
  \begin{itemize}
  \item[(var)] Then, $A=\kind$. This case is not possible since $\G\th
    A:s$ and $\kind$ is not typable.
  \item[(app)] If $B[x\to a]=\kind$, then either $B=\kind$ or else $B=x$ and
    $a=\kind$. Both cases are impossible since both $B$ and $a$ are
    typable while $\kind$ is not typable.
  \item[(conv)] This case is not possible since $\G\th B:s$ and $\kind$
    is not typable.
  \item[(fun)] Then, $\T_f=\kind$. This case is not possible since
    $\G\th\T_f:s_f$ and $\kind$ is not typable.
  \end{itemize}
  Hence, $T$ can only be typed by using (ax), (weak) and (prod).
\end{proof}
}


\section{Interpretation of types as reducibility candidates}
\label{sec-interp}

We aim to prove the termination of the union of two relations, $\ab$
and $\ar$,\hide{sharing some symbols, namely the abstraction and the
application of $\l$-calculus,} on the set of well-typed terms (which
depends on $\cR$ since $\ad$ includes $\a_\mc{R}$). As is well known, termination is not modular in general.\hide{even when the two relations share no symbols}
\hide{\cite{toyama87ipl}} As a $\b$ step can generate an $\cR$ step, and
vice versa, we cannot expect to prove the termination of
${\ab}\cup{\ar}$ from the termination of $\ab$ and $\ar$.\hide{, unless $\cR$ is restricted to some class (e.g. object-level
first-order rewriting systems \cite{dougherty92ic,barthe98icalp}).}
The termination of $\l\Pi/\cR$ cannot be reduced to the termination of the
simply-typed $\l$-calculus either (as done for $\l\Pi$ alone in
\cite{harper93jacm}) because of type-level rewriting rules like the ones defining {\tt
  El} in Example \ref{expl-list-poly}. Indeed, type-level rules enable
the encoding of functional PTS like Girard's System F, whose
termination cannot be reduced to the termination of the simply-typed $\l$-calculus
\cite{cousineau07tlca}.

So, following Girard \cite{girard88book}, to prove the termination of $\ab\cup\ar$, we
build a model of our calculus by interpreting types into sets
of terminating terms. To this end, we need to find an interpretation
$\I~$ having the following properties:
\begin{itemize}
\item Because types are identified modulo conversion, we need $\I~$ to be
  invariant by reduction: if $T$ is typable and $T\a T'$, then
  we must have $\I{T}=\I{T'}$.
\item As usual, to handle $\b$-reduction, we need a product type
  $\prod xAB$ to be interpreted by the set of terms $t$ such
  that, for all $a$ in the interpretation of $A$, $ta$ is in the
  interpretation of $B[x\to a]$, that is, we must have $\I{\prod xAB}=\product{a}{\I{A}}{\I{B[x\to a]}}$ where $\product{a}{P}{Q(a)}=\{t\mid\all a\in P,ta\in
  Q(a)\}$.
\end{itemize}

First, we define the interpretation
of predicates (and $\type$) as the least fixpoint of
a monotone function in a directed-complete (= chain-complete) partial
order \cite{markowsky76au}. Second, we define the interpretation of
kinds by induction on their size.

\begin{definition}[Interpretation of types]\label{def-interp}
  Let $\bI=\cF_p(\bT,\cP(\bT))$ be the set of partial functions from
  $\bT$ to the powerset of $\bT$.
  It is directed-complete wrt
  inclusion, allowing us to define $\cI$ as the least fixpoint of the monotone
  function $F:\bI\a\bI$ such that, if $I\in\bI$, then:
  \begin{itemize}
\item The domain of $F(I)$ is the set $D(I)$ of all the terminating
  terms $T$ such that, if $T$ reduces to some product term $\prod xAB$
  (not necessarily in normal form), then $A\in\dom(I)$ and, for all
  $a\in I(A)$, $B[x\to a]\in\dom(I)$.
\item If $T\in D(I)$ and the normal form\footnote{Because we assume
  local confluence, every terminating term $T$ has a unique normal
  form $\nf{T}$.} of $T$ is not a product, then $F(I)(T)=\SN$.
\item If $T\in D(I)$ and $\nf{T}=\prod xAB$, then
  $F(I)(T)=\product{a}{I(A)}{I(B[x\to a])}$.
\end{itemize}
  We now introduce $\cD=D(\cI)$ and
  define the interpretation of a term $T$ wrt to a
substitution $\s$, $\I{T}_\s$ (and simply $\I{T}$ if $\s$ is
the identity), as follows:
\begin{itemize}
\item $\I{s}_\s=\cD$ if $s\in\bS$,
\item $\I{\prod xAK}_\s=\product{a}{\I{A}_\s}{\I{K}_{[x\to a,\s]}}$ if
  $K\in\bK$ and $x\notin\dom(\s)$,
\item $\I{T}_\s=\cI(T\s)$ if $T\notin\bK\cup\{\kind\}$ and $T\s\in\cD$,
\item $\I{T}_\s=\SN$ otherwise.
\end{itemize}
A substitution $\s$ is adequate wrt an environment $\G$,
$\s\models\G$, if, for all $x:A\in\G$, $x\s\in\I{A}_\s$.
A typing map $\T$ is adequate if, for all $f$, $f\in\I{\T_f}$ whenever
$\th\T_f:s_f$ and $\T_f\in\I{s_f}$.

Let $\bC$ be the set of terms of the form $f\vt$ such that
$|\vt|=\arit(\T_f)$, $\th\T_f:s_f$, $\T_f\in\I{s_f}$ and, if
$\G_f=\vx:\vA$ and $\s=[\vx\to\vt]$, then $\s\models\G_f$.
(Informally, $\mb{C}$ is the set of terms obtained by fully applying
some function symbol to computable arguments.)
\end{definition}

We can then prove that, for all terms $T$, $\I{T}$ satisfies Girard's
conditions of reducibility candidates, called
computability predicates here, adapted to rewriting by including in
neutral terms every term of the form $f\vt$ when $f$ is applied to enough
arguments wrt $\cR$ \cite{blanqui05mscs}:

\begin{definition}[Computability predicates]\label{def-comp}
A term is neutral if it is of the form ${(\abs xAt)u\vv}$, $x\vv$ or
$f\vv$ with, for every rule $f\vl\a r\in\cR$, $|\vl|\le|\vv|$.

Let $\bP$ be the set of all the sets of terms $S$ (computability
predicates) such that (a) $S\sle\SN$, (b) $\red{S}\sle S$, and (c)
$t\in S$ if $t$ is neutral and $\red{t}\sle S$.
\end{definition}

Note that neutral terms satisfy the following key property: if $t$ is
neutral then, for all $u$, $tu$ is neutral and every reduct of $tu$ is
either of the form $t'u$ with $t'$ a reduct of $t$, or of the form
$tu'$ with $u'$ a reduct of
$u$. \hide{${\red{tu}}={{\red{t}u}\cup{t\red{u}}}$}

One can easily check that $\SN$ is a computability predicate.

Note also that a computability predicate is never empty: it contains
every neutral term in normal form. In particular, it contains every
variable.

We then get the following results (the proofs are given in Appendix \ref{annex-interp}):

\begin{lemma}\label{lem-props}
  \begin{enumerate}[(a)]
  \item\label{lem-comp-pred-int}
  For all terms $T$ and substitutions $\s$, $\I{T}_\s\in\bP$.
  \item\label{lem-int-red}
  If $T$ is typable, $T\s\in\cD$ and $T\a T'$, then
  $\I{T}_\s=\I{T'}_\s$.
  \item\label{lem-int-red-subs}
  If $T$ is typable, $T\s\in\cD$ and $\s\a\s'$, then
  $\I{T}_\s=\I{T}_{\s'}$.
  \item\label{lem-int-prod}
  If $\prod xAB$ is typable and $\prod x{A\s}{B\s}\in\cD$\hide{ and
  $x\notin\dom(\s)\cup\FV(\s)$},\\
  then $\I{\prod xAB}_\s=\product{a}{\I{A}_\s}{\I{B}_{[x\to a,\s]}}$.
  \item\label{lem-int-subs}
  If $\D\th U:s$, $\G\th\g:\D$ and $U\g\s\in\cD$, then
  $\I{U\g}_\s=\I{U}_{\g\s}$.
  \item\label{lem-comp-abs} Given $P\in\bP$ and, for all $a\in P$,
    $Q(a)\in\bP$ such that $Q(a')\sle Q(a)$ if $a\a a'$. Then, $\abs
    xAb\in\product{a}{P}{Q(a)}$ if $A\in\SN$ and, for all $a\in P$,
    $b[x\to a]\in Q(a)$.
  \end{enumerate}
\end{lemma}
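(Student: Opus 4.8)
The plan is to prove the six statements together, driven by a single fixpoint invariant for $\cI$ that I would establish first. Let $\Phi(I)$ be the conjunction of three clauses: (i)~$\dom(I)\in\bP$; (ii)~$I(U)\in\bP$ for every $U\in\dom(I)$; and (iii)~$I(U)=I(U')$ whenever $U,U'\in\dom(I)$ and $U\a U'$. I would show that $\Phi$ is preserved by $F$ and by suprema of directed families, so that, the least fixpoint $\cI$ being a supremum of iterates of $F$, $\Phi(\cI)$ holds. Clause~(i) for $F(I)$ is $\dom(F(I))=D(I)\in\bP$, which I check directly against the three conditions defining $\bP$, using that a neutral term is never a product and that reducts of a term reaching a product still reach it. Clause~(iii) for $F(I)$ follows because $F(I)(T)$ depends only on $\nf{T}$, which is common to $T$ and to every reduct (unique normal forms). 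Clause~(ii) is the delicate one: when $\nf{T}=\prod xAB$ we have $F(I)(T)=\product{a}{I(A)}{I(B[x\to a])}$, so I need closure of $\bP$ under dependent products.

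That product-closure is the technical heart: given $P\in\bP$ and a family $a\mapsto Q(a)\in\bP$ with $Q(a')\sle Q(a)$ whenever $a\a a'$, the set $\product{a}{P}{Q(a)}$ lies in $\bP$. Conditions (a) and (b) of $\bP$ are routine (apply the candidate to a variable; use closure of $Q(a)$ under reduction). Condition~(c) is where I use the key property of neutral terms: for neutral $t$ with $\red{t}\sle\product{a}{P}{Q(a)}$ I show $ta\in Q(a)$ for each $a\in P$ by well-founded induction on $a\in\SN$; since $ta$ is neutral and $\red{ta}=\{t'a\mid t'\in\red{t}\}\cup\{ta'\mid a'\in\red{a}\}$, the reducts $t'a$ are in $Q(a)$ because $t'$ already lies in the product, and the reducts $ta'$ are in $Q(a)$ by the induction hypothesis at $a'\in P$ together with $Q(a')\sle Q(a)$, whence $ta\in Q(a)$ by condition~(c) for $Q(a)$. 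The antitonicity hypothesis is exactly what clause~(iii) of $\Phi$ supplies inside the fixpoint argument, since $B[x\to a]\a^* B[x\to a']$ forces $I(B[x\to a])=I(B[x\to a'])$. One subtlety: $\bP$ does not contain $\emptyset$, so $\Phi$ fails at the bottom element; I would therefore start the iteration at $F(\bot)$, whose domain is the set of terminating terms reducing to no product and on which $F(\bot)$ is constantly $\SN$, and use that each term has finitely many one-step reducts to push clauses~(i) and~(iii) through directed suprema.

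With the engine in place the six parts fall out. For part~(a) I case-split on the definition of $\I{T}_\s$: the sort and ``otherwise'' clauses give $\cD$ and $\SN$, both in $\bP$ by clause~(i) and the remark that $\SN\in\bP$; the product clause is the product-closure lemma applied to the inductively obtained $\I{A}_\s,\I{K}_{[x\to a,\s]}\in\bP$; and the clause $\I{T}_\s=\cI(T\s)$ uses clause~(ii). Parts~(b) and~(c) go by induction on the structure of $T$: for a kind $\prod xAK$ I use the product clause and the induction hypothesis, while for a non-kind $T$ I have $\I{T}_\s=\cI(T\s)$ and conclude from clause~(iii) (noting $T\s\a^* T'\s$, resp. $T\s\a^* T\s'$, and that reduction preserves typability, non-kindness and membership in $\cD$). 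Part~(d) splits on whether $\prod xAB$ is a kind: if it is, the identity is the defining clause; if not, $\I{\prod xAB}_\s=\cI(\prod x{A\s}{B\s})=F(\cI)(\prod x{A\s}{B\s})$, whose normal form is a product, so unfolding $F$ and using the invariance of part~(b) to replace $\nf{A\s},\nf{B\s}$ by $A\s,B\s$ yields the claimed product. Part~(e) is an induction on $U$ guided by $\D\th U:s$, the product case reducing to~(d) and the induction hypothesis, the non-kind case being the syntactic identity $(U\g)\s=U(\g\s)$ applied inside $\cI$.

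Finally, part~(f) is an independent argument in the same spirit as the product-closure: $(\abs xAb)a$ is neutral, so by condition~(c) for $Q(a)$ it suffices that its reducts lie in $Q(a)$; the head $\b$-redex gives $b[x\to a]\in Q(a)$ by hypothesis, reductions inside $A,b,a$ are handled by well-founded induction on the terminating triple $(A,b,a)$, and the step reducing $a$ to $a'$ uses $a'\in P$, the induction hypothesis, and the assumed $Q(a')\sle Q(a)$. The main obstacle is the engine: establishing the fixpoint invariant $\Phi$, and in particular the dependent-product closure with its antitonicity requirement, while coping with the fact that $\bP$ excludes $\emptyset$ so that the iteration must be started above $\bot$. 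Once $\Phi(\cI)$ is available, the remaining parts are case analyses and structural inductions whose only nontrivial inputs are the product-closure lemma and the key property of neutral terms, both already isolated.
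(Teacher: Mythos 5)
Your proposal tracks the paper closely in parts (b)--(f), but it diverges at the foundation in one place where the divergence is fatal. Your product-closure lemma carries the antitonicity hypothesis ($Q(a')\sle Q(a)$ whenever $a\a a'$), and your proof of condition (c) genuinely uses it: you place $ta'$ in $Q(a')$ by the induction hypothesis and transport it into $Q(a)$ via that hypothesis. But in the application you make of this lemma for part (a) --- the kind case $T=\prod xAK$, with $Q(a)=\I{K}_{[x\to a,\s]}$ --- antitonicity is not available, and is in fact false. Part (a) quantifies over \emph{all} terms and substitutions, with no typability assumption, and $\cD$ is closed under reduction but not under expansion. Hence a non-kind domain $A'$ occurring inside $K$ can, along a reduction $a\a a'$, move from the ``otherwise'' clause ($\I{A'}_{[x\to a,\s]}=\SN$) to the $\cI$-clause ($\I{A'}_{[x\to a',\s]}=\cI(A'[x\to a',\s])\subsetneq\SN$); by contravariance of the product in its domain, $\I{K}_{[x\to a',\s]}$ then becomes \emph{strictly larger} than $\I{K}_{[x\to a,\s]}$. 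Concretely, take $K=\prod y{(x\,c)}{\type}$, $u=\prod v{C}{((\abs wDd)(v\,v))}$, $a=\abs zZu$, $a'=\abs zZ{\prod vCd}$, with $c,C,D,Z,d$ undefined symbols: $u$ is SN and $u\a\prod vCd\in\cD$, but $u\notin\cD$ (instantiate $v$ with $\w=\abs xC{xx}$); consequently $a\a a'$, $\I{x\,c}_{[x\to a]}=\SN$ while $\I{x\,c}_{[x\to a']}=\product{e}{\SN}{\SN}$, and one checks $Q(a)\subsetneq Q(a')$ (e.g.\ $\abs s{Z}{\prod vC{((\abs wDd)(s\,v))}}\in Q(a')\setminus Q(a)$), so $Q(a')\not\sle Q(a)$. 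Your engine therefore cannot establish part (a), and everything downstream of it collapses. The paper's Lemma \ref{lem-comp-pred-prod} avoids the hypothesis altogether by a different induction: fix the target $Q(a)$ once and for all and prove $tb\in Q(a)$ for every $b$ with $a\a^*b$; the reducts $t'b$ are then handled as reducts of $t'a\in Q(a)$ using closure of $Q(a)$ under reduction (condition (b) of $\bP$), and the reducts $tb'$ by the induction hypothesis, so no transport between different $Q$'s ever occurs. (Your part (f) is fine: there antitonicity is part of the statement, as in the paper.)

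A second, lesser problem is your treatment of the fixpoint. To push clause (i) of your invariant ($\dom(I)\in\bP$) through directed suprema you assume that every term has finitely many one-step reducts, but the paper never assumes $\cR$ or $\bF$ finite, so $\a$ need not be finitely branching, and condition (c) of $\bP$ for a union of domains cannot then be reduced to a single member of the directed family. The paper's architecture avoids both this and your $\bot$-repair: the only property carried through the iteration is membership in $\cF_p(\bT,\bP)$ (all \emph{values} in $\bP$), which holds of the empty function and passes through arbitrary directed unions; the domain property $\cD\in\bP$ and invariance under reduction are proved \emph{a posteriori} at the fixpoint, where $\cD=D(\cI)$ and $\cI=F(\cI)$ make them short arguments (Lemmas \ref{lem-comp-pred-dom} and \ref{lem-type-int-red}). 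I suggest you restructure accordingly: strengthen the product lemma by dropping antitonicity, weaken the iterated invariant to your clause (ii) only, and recover clauses (i) and (iii) directly at the fixpoint.
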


We can finally prove that our model is adequate, that is, every term of type $T$
belongs to $\I{T}$, if the typing map $\T$ itself is adequate. This
reduces the termination of well-typed terms to the computability of
function symbols.

\begin{lemma}[Adequacy]\label{lem-comp}
  If $\T$ is adequate, $\G\th t:T$ and $\s\models\G$, then
  $t\s\in\I{T}_\s$.
\end{lemma}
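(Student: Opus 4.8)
The plan is to argue by induction on the derivation of $\G\th t:T$, treating one case per typing rule of Figure \ref{fig-typ}. Two of the hypotheses discharge the base-like cases almost for free. The assumption $\s\models\G$ settles (var), since by definition it already states $x\s\in\I{A}_\s$. The adequacy of $\T$ settles (fun): the premise $\th\T_f:s_f$ feeds the induction hypothesis to give $\T_f\in\I{s_f}=\cD$, and then $f\in\I{\T_f}$ holds precisely because $\T$ is an adequate typing map. Rule (ax) reduces to $\type\in\cD$, which holds because $\type$ is a normal form that is not a product, and (weak) follows by restricting $\s$ to $\G$ and invoking the induction hypothesis, using that the weakened variable is free in neither $b$ nor $B$.

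To push the application and conversion cases through, I would strengthen the statement so that the induction simultaneously yields $T\s\in\cD$ (unless $T=\kind$). This invariant is exactly what is needed to discharge the side conditions ``$\cdot\,\s\in\cD$'' of Lemmas \ref{lem-props}(b)--(e), and it turns out to be self-propagating, so no external type-correctness metatheorem is required: in (var), (abs), (conv) and (fun) the sort of the type occurs as a premise and the induction hypothesis supplies it; in (ax) and (prod) the type is a sort; and in (app) it is recovered from the product, as explained below.

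The computational heart lies in (prod), (abs) and (app). For (prod) I would prove $\prod x{A\s}{B\s}\in\cD=D(\cI)$ straight from the definition of $F$: the two components are terminating (instantiating $B$ at a fresh variable, which belongs to every candidate), and the domain/product condition follows from Lemma \ref{lem-props}(b) together with closure of $\cD$ under reduction. For (abs) I would rewrite $\I{\prod xAB}_\s$ using Lemma \ref{lem-props}(d) and then apply the abstraction lemma \ref{lem-props}(f), whose monotonicity requirement $Q(a')\sle Q(a)$ is handed over by Lemma \ref{lem-props}(c), while the premise $b[x\to a]\in Q(a)$ is the induction hypothesis for the extended substitution $[x\to a,\s]\models\G,x:A$. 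For (app), the strengthened invariant on $\prod xAB$ gives $\prod x{A\s}{B\s}\in\cD$, so Lemma \ref{lem-props}(d) converts $t\s\in\I{\prod xAB}_\s$ into $(t\s)(a\s)\in\I{B}_{[x\to a\s,\s]}$; moreover the domain condition defining $\cD$ directly yields $B[x\to a]\s\in\cD$, which both discharges the side condition of Lemma \ref{lem-props}(e) and propagates the invariant, and Lemma \ref{lem-props}(e) then identifies $\I{B}_{[x\to a\s,\s]}$ with $\I{B[x\to a]}_\s$.

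The step I expect to be the main obstacle is (conv), where I must establish $\I{A}_\s=\I{B}_\s$ from $A\ad B$. Choosing a common reduct $C$, the plan is to apply Lemma \ref{lem-props}(b) to $A\a^* C$ and to $B\a^* C$; this needs both $A\s\in\cD$ and $B\s\in\cD$, the latter coming from the premise $\G\th B:s$ via the induction hypothesis and the former from the strengthened invariant applied to the premise $\G\th a:A$ (the degenerate case $A=\kind$, which forces $B=\kind$ because $\kind$ is normal and isolated under conversion, is dealt with separately since $\I{\kind}_\s=\SN$). More than any single calculation, the delicate point is keeping this $\cD$-membership invariant coherent across every case so that the conversion lemmas may always be invoked.
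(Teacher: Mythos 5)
Your induction scheme and the per-case ingredients are exactly those of the paper's proof: (var)/(weak)/(ax)/(fun) handled the same way, (prod) from the definition of $D(\cI)$, (abs) via Lemma~\ref{lem-props}(d)+(c)+(f), (app) via Lemma~\ref{lem-props}(d)+(e), (conv) via Lemma~\ref{lem-props}(b). The genuine divergence is your claim that strengthening the induction statement with the semantic invariant ``$T\s\in\cD$ unless $T=\kind$'' makes the argument self-contained, ``so no external type-correctness metatheorem is required''. That claim is where the proof breaks. The invariant itself does propagate correctly (in particular your observation that $(\prod xAB)\s\in\cD$ together with $a\s\in\cI(A\s)$ yields $B[x\to a]\s\in\cD$ by the very definition of $D(\cI)$ is right), but it is a \emph{semantic} fact, while the lemmas you invoke carry \emph{syntactic} typability hypotheses that it cannot discharge: Lemma~\ref{lem-props}(b) requires ``$T$ typable'', (d) requires ``$\prod xAB$ typable'', and (e) requires the judgments $\D\th U:s$ and $\G\th\g:\D$. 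In the (app) case, the only way to pass from the premise $\G\th t:\prod xAB$ to ``$\prod xAB$ is typable'' is precisely the type-correctness lemma (if $\G\th t:T$ then $T=\kind$ or $\G\th T:s$, Lemma~28 of \cite{blanqui01phd}) that you set out to avoid, and discharging the hypotheses of Lemma~\ref{lem-props}(e) additionally needs inversion of the product judgment. Likewise in (conv), applying Lemma~\ref{lem-props}(b) along $A\a^*C$ needs $A$ typable, which the paper obtains from Lemma~42 of \cite{blanqui01phd} and encodes as the extra premise of (conv'). These hypotheses are not cosmetic and cannot be dropped from the lemmas: typability is what rules out $A\in\bK$, so that $\I{A}_\s=\cI(A\s)$; for an untypable product such as $\prod{x}{\type}{B}$ the equality of (d) genuinely fails, since $\I{\type}_\s=\cD$ whereas the fixpoint definition of $\cI$ uses $\cI(\type)=\SN$.

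The repair is exactly the paper's device: invoke the two metatheoretic lemmas of \cite{blanqui01phd} once, upfront, to justify adding the premises $\G\th\prod xAB:s$ to (app) and $\G\th A:s$ to (conv), giving (app') and (conv'); the plain induction hypothesis applied to these premises then supplies both the typability and every ``$\cdot\,\s\in\cD$'' you need, and your extra invariant becomes redundant. Two smaller slips: $\I{\kind}_\s=\cD$, not $\SN$, by the first clause of Definition~\ref{def-interp}; and your degenerate (conv) case $A=\kind$ is in fact vacuous, since $B\a^*\kind$ together with $\G\th B:s$ would make $\kind$ typable by subject reduction, contradicting the hierarchy of terms.
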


\begin{proof}
  First note that, if $\G\th t:T$, then either $T=\kind$ or $\G\th T:s$
  \cite[Lemma 28]{blanqui01phd}. Moreover, if $\G\th a:A$, $A\ad B$
  and $\G\th B:s$ (the premises of the (conv) rule), then $\G\th A:s$
  \cite[Lemma 42]{blanqui01phd} (because $\a$ preserves
  typing). Hence, the relation $\th$ is unchanged if one adds the
  premise $\G\th A:s$ in (conv), giving the rule (conv'). Similarly,
  we add the premise $\G\th\prod xAB:s$ in (app), giving the rule
  (app'). We now prove the lemma by induction on $\G\th t:T$ using
  (app') and (conv'):
  \begin{description}
  \item[(ax)]  It is immediate that $\type\in\I\kind_\s=\cD$.

  \item[(var)] By assumption on $\s$.

  \item[(weak)] If $\s\models\G,x:A$, then $\s\models\G$. So, the
    result follows by induction hypothesis.

  \item[(prod)] Is $(\prod xAB)\s$ in $\I{s}_\s=\cD$? Wlog we can assume
    $x\notin\dom(\s)\cup\FV(\s)$. So, ${(\prod xAB)\s}={\prod x{A\s}{B\s}}$. By
    induction hypothesis, $A\s\in\I\type_\s=\cD$. Let now $a\in
    \cI(A\s)$ and $\s'=[x\to a,\s]$. Note that $\cI(A\s)=\I{A}_\s$. So,
    $\s'\models\G,x:A$ and, by induction hypothesis,
    $B\s'\in\I{s}_\s=\cD$. Since $x\notin\dom(\s)\cup\FV(\s)$,
    we have $B\s'=(B\s)[x\to a]$. Therefore, $(\prod xAB)\s\in\I{s}_\s$.

  \item[(abs)] Is $(\abs xAb)\s$ in $\I{\prod xAB}_\s$? Wlog we can assume that
    $x\notin\dom(\s)\cup\FV(\s)$. So, ${(\abs xAb)\s}={\abs x{A\s}{b\s}}$. By
    Lemma \ref{lem-props}\ref{lem-int-prod}, $\I{\prod xAB}_\s=
    \product{a}{\I{A}_\s}{\I{B}_{[x\to a,\s]}}$. By Lemma
    \ref{lem-props}\ref{lem-int-red-subs}, $\I{B}_{[x\to a,\s]}$ is an
    $\I{A}_\s$-indexed family of computability predicates such that
    $\I{B}_{[x\to a',\s]}=\I{B}_{[x\to a,\s]}$ whenever $a\a
    a'$. Hence, by Lemma \ref{lem-props}\ref{lem-comp-abs}, $\abs
    x{A\s}{b\s}\in\I{\prod xAB}_\s$ if $A\s\in\SN$ and, for all
    $a\in\I{A}_\s$, $(b\s)[x\to a]\in\I{B}_{\s'}$ where $\s'=[x\to
      a,\s]$. By induction hypothesis,
    $(\prod xAB)\s\in\I{s}_\s=\cD$. Since
    $x\notin\dom(\s)\cup\FV(\s)$, $(\prod xAB)\s=\prod x{A\s}{B\s}$ and
    $(b\s)[x\to a]=b\s'$. Since $\cD\sle\SN$, we have
    $A\s\in\SN$. Moreover, since $\s'\models\G,x:A$, we have
    $b\s'\in\I{B}_{\s'}$ by induction hypothesis.

  \item[(app')] Is $(ta)\s=(t\s)(a\s)$ in $\I{B[x\to a]}_\s$? By induction
    hypothesis, $t\s\in\I{\prod xAB}_\s$, $a\s\in\I{A}_\s$ and
    $(\prod xAB)\s\in\I{s}=\cD$. By Lemma \ref{lem-props}\ref{lem-int-prod},
    $\I{\prod xAB}_\s=\product{\al}{\I{A}_\s}{\I{B}_{[x\to\al,\s]}}$. Hence,
    $(t\s)(a\s)\in\I{B}_{\s'}$ where $\s'=[x\to a\s,\s]$. Wlog we can
    assume $x\notin\dom(\s)\cup\FV(\s)$. So, $\s'=[x\to a]\s$. Hence,
    by Lemma \ref{lem-props}\ref{lem-int-subs}, $\I{B}_{\s'}=\I{B[x\to a]}_\s$.

  \item[(conv')] By induction hypothesis, $a\s\in\I{A}_\s$,
    $A\s\in\I{s}_\s=\cD$ and $B\s\in\I{s}_\s=\cD$. By Lemma
    \ref{lem-props}\ref{lem-int-red}, $\I{A}_\s=\I{B}_\s$. So, $a\s\in\I{B}_\s$.

  \item[(fun)] By induction hypothesis,
    $\T_f\in\I{s_f}_\s=\cD$. Therefore, $f\in\I{\T_f}_\s=\I{\T_f}$
    since $\T$ is adequate.\qedhere
  \end{description}
\end{proof}

\hide{As usual, by taking the identity for $\s$ (variables are computable),
we get that $\a$ terminates on terms typable in $\l\Pi/\cR$ if $\T$ is
adequate and $\a$ is locally confluent and preserves typing.}


\section{Dependency pairs theorem}
\label{sec-dp-thm}

\newcommand\thfl{\th_{\!\!\!f\vl}}
\newcommand\thltf{\th_{\!\!\prec f}}
\newcommand{\cstr}{\Rrightarrow}

Now, we prove that the adequacy of $\T$
can be reduced to the absence of infinite sequences of dependency pairs, as
shown by Arts and Giesl for first-order rewriting \cite{arts00tcs}.

\begin{definition}[Dependency pairs]
Let $f\vl$ > $g\vm$ iff there is a rule $f\vl \a r \in \cR$, $g$ is defined
and $g\vm$ is a subterm of $r$ such that $\vm$ are all the arguments to which $g$ is applied.
The relation $>$ is the set of dependency pairs.

Let ${\call}={\a_\arg^*>_s}$ be the relation on the set $\bC$ (Def. \ref{def-interp}), where $f\vt\a_\arg f\vu$ iff $\vt\a_{prod}\vu$
(reduction in one argument), and $>_s$ is the closure by substitution
and left-application of $>$:
$f t_1\dots t_p~\call\,g u_1\dots u_q$ iff there are a dependency pair $f l_1\dots l_i>g m_1\dots m_j$ with $i\<p$ and $j\<q$
and a substitution $\s$ such that,
for all $k\<i$, $t_k\a^*l_k\s$ and, for all $k\<j$, $m_k\s=u_k$.
\end{definition}

In our setting, we have to close $>_s$ by left-application because
function symbols are curried. When a function symbol $f$ is not
fully applied wrt $\arit(\T_f)$, the missing arguments must be
considered as potentially being anything. Indeed, the following rewriting system:
\begin{lstlisting}[mathescape=true]
       app x y $\a$ x y             f x y $\a$ app (f x) y
\end{lstlisting}
whose dependency pairs are {\tt f x y > app (f x) y} and {\tt f x y > f x},
does not terminate, but there is no way to construct an infinite sequence of dependency pairs
without adding an argument to the right-hand side of the second dependency pair.

\begin{example}\label{expl-mult}
 The rules of Example \ref{expl-list-poly} have the following dependency pairs (the pairs whose left-hand side is headed by {\tt fil} or {\tt fil\_aux} can be found in Appendix \ref{annex-matrices}):
\begin{lstlisting}[escapeinside={*}{*}]
*\color{red} A:*               El (arrow a b) > El a
*\color{red} B:*               El (arrow a b) > El b
*\color{red} C:*                    (s p) + q > p + q
*\color{red} D:*   app a _ (cons _ x p l) q m > p + q
*\color{red} E:*   app a _ (cons _ x p l) q m > app a p l q m
*\color{red} F:*len_fil a f _ (cons _ x p l)  > len_fil_aux (f x) a f p l
*\color{red} G:*len_fil a f _ (app _ p l q m) >
                                (len_fil a f p l) + (len_fil a f q m)
*\color{red} H:*len_fil a f _ (app _ p l q m) > len_fil a f p l
*\color{red} I:*len_fil a f _ (app _ p l q m) > len_fil a f q m
*\color{red} J:*    len_fil_aux true  a f p l > len_fil a f p l
*\color{red} K:*    len_fil_aux false a f p l > len_fil a f p l
\end{lstlisting}
\end{example}

In \cite{arts00tcs}, a sequence of dependency pairs interleaved with
$\a_\arg$ steps is called a chain. Arts and Giesl proved that, in a
first-order term algebra, $\ar$ terminates if and only if there are no
infinite chains, that is, if and only if $\call$ terminates.
Moreover, in a first-order term algebra, $\call$
terminates if and only if, for all $f$ and $\vt$, $f\vt$ terminates
wrt $\call$ whenever $\vt$ terminates wrt $\a$. In our framework,
this last condition is similar to saying that $\T$ is adequate.

We now introduce the class of systems to which we will extend
Arts and Giesl's theorem.

\begin{definition}[Well-structured system]\label{def-wf-rule}\label{def-preced}
  Let $\succeq$ be the smallest quasi-order on $\bF$ such that
  $f\succeq g$ if $g$ occurs in $\T_f$ or if there is a rule $f\vl\a
  r\in\cR$ with $g$ (defined or undefined) occurring in $r$.
  Then, let ${\succ}={\succeq\moins\preceq}$ be the strict part of $\succeq$.
  A rewriting system $\cR$ is well-structured if:
  \begin{enumerate}[(a)]
  \item $\succ$ is well-founded;
  \item for every rule $f\vl\a r$, $|\vl|\le\arit(\T_f)$;
  \item for every dependency pair $f\vl>g\vm$, $|\vm|\leq \arit(\T_g)$;
  \item every rule $f\vl\a r$ is equipped with an environment $\D_{f\vl\a
    r}$ such that, if $\T_f={\prod\vx\vT U}$ and $\pi=[\vx\to\vl]$, then
    $\D_{f\vl\a r}\thfl r:U\pi$, where $\thfl$ is the restriction of
    $\th$ defined in Fig. \ref{fig-thf}.
  \end{enumerate}
\end{definition}

\begin{figure}[t]
\caption{Restricted type systems $\thfl$ and $\thltf$ \label{fig-thf}}
\begin{center}
  \begin{tabular}{rc}
  (ax) & $\cfrac{}{\thfl\type:\kind}$\\[3mm]

  (var) & $\cfrac{\G\thltf A:s\quad x\notin\dom(\G)}{\G,x:A\thfl x:A}$\\[3mm]
  \end{tabular}
  \begin{tabular}{rc}
  (weak) & $\cfrac{\G\thltf A:s\quad \G\thfl b:B\quad x\notin\dom(\G)}{\G,x:A\thfl b:B}$\\[3mm]

  (prod) & $\cfrac{\G\thfl A:\type\quad \G,x:A\thfl B:s}{\G\thfl\prod xAB:s}$\\[3mm]
  \end{tabular}
  \begin{tabular}{rc}
  (abs) & $\cfrac{\G,x:A\thfl b:B\quad \G\thltf\prod xAB:s}{\G\thfl\abs xAb:\prod xAB}$\\[3mm]

  (app') &
   $\cfrac{\G\thfl t:\prod xAB\quad \G\thfl a:A \quad \G\thltf \prod xAB:s}
   {\G\thfl ta:B[x\to a]}$\\[3mm]

  (conv') &
  $\cfrac{\G\thfl a:A\quad A\ad B\quad \G\thltf B:s \quad \G\thltf A:s}
   {\G\thfl a:B}$\\[3mm]
  \end{tabular}
  \begin{tabular}{rcl}

  (dp) & $\cfrac{\thltf\T_g:s_g\quad \G\thfl \g:\S}{\G\thfl g\vy\g:V\g}$ &
  ($\T_g=(\prod\vy\vU V),\S=\vy:\vU,g\vy\g<f\vl$)\\[3mm]

  (const) & $\cfrac{\thltf \T_g:s_g}{\thfl g:\T_g}$ &
  ($g$ undefined)\\
  \end{tabular}
\end{center}
and $\thltf$ is defined by the same rules as $\th$, except (fun) replaced by:
\begin{center}
  \begin{tabular}{rc}
  (fun$_{\prec f}$) & $\cfrac{\thltf\T_g:s_g\quad g\prec f}{\thltf g:\T_g}$
  \end{tabular}
\end{center}
\end{figure}

Condition (a) is always satisfied when $\bF$ is finite.
Condition (b) ensures that a term of the form $f\vt$ is neutral
whenever $|\vt|=\arit(\T_f)$. Condition (c) ensures that $>$ is included
in $\call$.

The relation $\thfl\,$ corresponds to the notion of computability
closure in \cite{blanqui05mscs}, with the ordering on function calls
replaced by the dependency pair relation. It is similar to $\th$
except that it uses the variant of (conv) and (app) used in the proof
of the adequacy lemma; (fun) is split in the rules (const) for
undefined symbols and (dp) for dependency pairs whose left-hand side
is $f\vl$; every type occurring in an object term or every type of a
function symbol occurring in a term is required to be typable by using
symbols smaller than $f$ only.

The environment $\D_{f\vl\a r}$ can be inferred by \tool{Dedukti} when
one restricts rule left hand-sides to some well-behaved class of terms
like algebraic terms or Miller patterns\hide{\cite{miller91jlc}} (in $\l$Prolog).

One can check that Example \ref{expl-list-poly} is
well-structured (the proof is given in Appendix \ref{annex-matrices}).

Finally, we need matching to be compatible with computability, that
is, if $f\vl\a r\in\cR$ and $\vl\s$ are computable, then $\s$ is
computable, a condition called accessibility in \cite{blanqui05mscs}:

\begin{definition}[Accessible system]\label{def-valid-rule}
  A well-structured system $\cR$ is accessible if, for all
  substitutions $\s$ and rules $f\vl\a r$ with $\T_f=\prod\vx\vT U$ and
  $|\vx|=|\vl|$, we have $\s\models\D_{f\vl\a r}$ whenever
  $\th\T_f:s_f$, $\T_f\in\I{s_f}$ and
  $[\vx\mapsto\vl]\s\models\vx:\vT$.
\end{definition}

This property is not always satisfied because the subterm relation
does not preserve computability in general. Indeed, if $C$ is an undefined type
constant, then $\I{C}=\SN$. However, $\I{C\A C}\neq\SN$ since $\w=\abs
xC{xx}\in\SN$ and $\w\w\notin\SN$. Hence, if $c$ is an undefined function symbol
of type $\T_c=(C\A C)\A C$, then $c\,\w\in\I{C}$ but $\w\notin\I{C\A C}$.

We can now state the main lemma:

\begin{lemma}\label{lem-comp-fun}
  $\T$ is adequate if $\call$ terminates and $\cR$ is well-structured
  and accessible.
\end{lemma}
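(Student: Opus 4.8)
The plan is to reformulate adequacy as a computability statement about the set $\bC$ and then establish it by a single well-founded induction whose order packages together the dependency-pair relation $\call$ and reduction in arguments.

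\textbf{Reformulation.} Fix $f$ with $\th\T_f:s_f$ and $\T_f\in\I{s_f}$, and write $\T_f=\prod\vx\vT U$ with $U$ not a product and $|\vx|=\arit(\T_f)$. Unfolding the product interpretation repeatedly with Lemma~\ref{lem-props}\ref{lem-int-prod}, the goal $f\in\I{\T_f}$ is equivalent to: for every $\vt$ with $|\vt|=\arit(\T_f)$ and $[\vx\to\vt]\models\G_f$, one has $f\vt\in\I{U}_{[\vx\to\vt]}$. By Definition~\ref{def-interp} these are exactly the elements of $\bC$ headed by $f$, so it suffices to prove that \emph{every} $c=f\vt\in\bC$ is computable, i.e.\ $c\in\I{U}_{[\vx\to\vt]}$.

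\textbf{The induction.} First I would observe that $\a_\arg\cup\call$ is well-founded on $\bC$: the computable arguments of any $f\vt\in\bC$ lie in $\SN$, so $\a_\arg$ is well-founded, $\call$ terminates by hypothesis, and since $\a_\arg\cdot\call\sle\call$ any infinite $\a_\arg\cup\call$ chain yields either an infinite $\call$ chain or an infinite argument reduction, both impossible. I then induct on this relation. Given $f\vt\in\bC$, condition~(b) of Definition~\ref{def-wf-rule} gives $|\vt|=\arit(\T_f)\ge|\vl|$ for every rule, so $f\vt$ is neutral; since $\I{U}_{[\vx\to\vt]}\in\bP$ (Lemma~\ref{lem-props}\ref{lem-comp-pred-int}), clause~(c) of Definition~\ref{def-comp} reduces the goal to $\red{f\vt}\sle\I{U}_{[\vx\to\vt]}$. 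A reduct is either (i) some $f\vt'$ with $\vt\a_\arg\vt'$, which is again in $\bC$ and smaller, so the induction hypothesis gives $f\vt'\in\I{U}_{[\vx\to\vt']}=\I{U}_{[\vx\to\vt]}$ by Lemma~\ref{lem-props}\ref{lem-int-red-subs}; or (ii) a head step by a rule $f\vl\a r$, where the matched prefix of $\vt$ equals $\vl\s$. In case~(ii), the matched arguments are computable, so $[\vx\to\vl]\s\models\vx:\vT$, and accessibility (Definition~\ref{def-valid-rule}) yields $\s\models\D_{f\vl\a r}$.

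\textbf{Computability closure soundness (the crux).} The remaining obligation in case~(ii) is $r\s\in\I{U}_{[\vx\to\vt]}$, which I would obtain from condition~(d)'s derivation $\D_{f\vl\a r}\thfl r:U\pi$ together with an auxiliary lemma: \emph{if $\D\thfl e:E$ and $\theta\models\D$, then $e\theta\in\I{E}_\theta$}, proved by induction on the $\thfl$-derivation. Most rules copy the corresponding cases of the Adequacy lemma (Lemma~\ref{lem-comp}), the $\thltf$-premises guaranteeing via the well-founded precedence $\succ$ (Definition~\ref{def-wf-rule}(a)) that the types and kinds appearing are interpreted soundly using only smaller symbols. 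The two genuinely new cases are (const), where an undefined $g$ applied to computable arguments is neutral with reducts only in its arguments and is therefore computable directly; and (dp), where $g\vy\g$ with $g\vy\g<f\vl$ becomes, under $\theta$, the symbol $g$ fully applied (condition~(c) ensures $|\vm|\le\arit(\T_g)$) to computable arguments, hence an element of $\bC$ that satisfies $f\vt\call g(\vy\g)\theta$ through $>_s$, so the \emph{main} induction hypothesis applies and delivers its computability. Applying this lemma to $r$ gives $r\s\in\I{U\pi}_\s=\I{U}_{[\vx\to\vt]}$ by Lemma~\ref{lem-props}\ref{lem-int-subs}, and any leftover arguments $t_{|\vl|+1},\dots$ keep it in the same interpretation through the product structure.

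\textbf{Main obstacle.} I expect the hardest part to be the correct threading of the nested inductions in the closure lemma: verifying that the (dp) case produces a term genuinely in $\bC$ and genuinely $\call$-below the current $f\vt$ (so that the main hypothesis is legitimately invoked), and that all the interpretation indices line up across matching, substitution and argument reduction via Lemma~\ref{lem-props}\ref{lem-int-red}, \ref{lem-int-red-subs} and \ref{lem-int-subs}. Establishing well-foundedness of $\a_\arg\cup\call$ on $\bC$ and checking that syntactic matching makes the left-hand side arguments computable (so accessibility fires) are the other points needing care, though they are routine once the order is in place.
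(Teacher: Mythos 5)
Your architecture reproduces the core of the paper's proof: the reformulation of $f\in\I{\T_f}$ as computability of every $f\vt\in\bC$ (iterating Lemma~\ref{lem-props}\ref{lem-int-prod}), the well-founded induction on ${\a_\arg}\cup{\call}$ (using ${\a_\arg\call}\sle{\call}$), the reduction to reducts via neutrality (condition~(b) of Definition~\ref{def-wf-rule}), the case split between argument reductions and head rewrites, the use of accessibility to obtain $\s\models\D_{f\vl\a r}$, and a soundness lemma for the restricted typing relation of Fig.~\ref{fig-thf}, proved by induction on the derivation, whose (dp) case re-enters the main induction hypothesis. This is exactly the paper's second and third inductions.

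But there is a genuine gap: you announce a \emph{single} well-founded induction, packaging only $\call$ and $\a_\arg$, and you dispose of the $\th_{\prec f}$-premises with the phrase that the well-founded precedence ``guarantees'' that types are interpreted soundly. That is not a step your induction can justify. The rule (fun$_{\prec f}$) requires, for every defined symbol $g\prec f$, the full adequacy statement $g\in\I{\T_g}$, i.e., computability of \emph{every} $g\vu\in\bC$ with arbitrary computable arguments $\vu$. Such terms are in general not below the current $f\vt$ in ${\a_\arg}\cup{\call}$: the symbol $g$ may occur only in types (as \texttt{len\_fil} occurs in the type of \texttt{fil}), in which case no dependency pair connects them, and even when one does, the (dp) rule covers only instances $g\vm\s$ of dependency-pair right-hand sides, whereas (fun$_{\prec f}$) delivers the bare symbol $g$, applicable to arbitrary arguments inside types. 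So neither your main induction hypothesis nor your structural induction on the derivation yields $g\in\I{\T_g}$; asserting it is circular, since it is the very statement of the lemma, applied to $g$. The paper closes this hole with an outer, first induction on the well-founded precedence $\succ$ (condition~(a) of Definition~\ref{def-wf-rule}): the whole construction is carried out for $f$ under the hypothesis that all $g\prec f$ are already adequate. Your proof is repaired by wrapping your induction inside that outer one; as written, the (fun$_{\prec f}$) case cannot be discharged. (A minor imprecision besides: for a rule with $|\vl|<\arit(\T_f)$, condition~(d) types $r$ by the residual product, not by $U\pi$; since you do treat the leftover arguments afterwards, this one is only presentational.)
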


\begin{proof}
  Since $\cR$ is well-structured, $\succ$ is well-founded by condition
  (a). We prove that, for all $f\in\bF$, $f\in\I{\T_f}$, by induction
  on $\succ$. So, let $f\in\bF$ with $\T_f=\all\G_f,U$ and
  $\G_f=x_1:T_1,\ldots,x_n:T_n$. By induction hypothesis, we have
  that, for all $g\prec f$, $g\in\I{\T_g}$.

  Since $\a_\arg$ and $\call$ terminate on $\bC$ and
  ${\a_\arg\call}\sle{\call}$, we have that ${\a_\arg}\cup{\call}$
  terminates. We now prove that, for all $f\vt\in\bC$, we have
  $f\vt\in\I{U}_\t$ where $\t=[\vx\to\vt]$, by a second induction on
  ${\a_\arg}\cup{\call}$. By condition (b),
  $f\vt$ is neutral. Hence, by definition of computability, it
  suffices to prove that, for all $u\in{\red{f\vt}}$,
  $u\in\I{U}_\t$. There are 2 cases:
  \begin{itemize}
  \item $u=f\vv$ with $\vt\a_{prod}\vv$. Then, we can conclude by
    the first induction hypothesis.

  \item There are $fl_1\ldots l_k\a r\in\cR$ and $\s$ such that
    $u=(r\s) t_{k+1}\ldots t_n$ and, for all $i\in\{1,\ldots,k\}$,
    $t_i=l_i\s$. Since $f\vt\in\bC$, we have $\pi\s\models\G_f$. Since
    $\cR$ is accessible, we get that $\s\models\D_{f\vl\a r}$. By
    condition (d), we have $\D_{f\vl\a r}\thfl r:V\pi$ where
    $V=\prod{x_{k+1}}{T_{k+1}}\ldots\prod{x_n}{T_n}U$.

    Now, we prove that, for all $\G$, $t$ and $T$, if $\G\thfl t:T$
    ($\G\thltf t:T$ resp.) and $\s\models\G$, then $t\s\in\I{T}_\s$,
    by a third induction on the structure of the derivation of $\G\thfl t:T$
    ($\G\thltf t:T$ resp.), as in the proof of Lemma \ref{lem-comp}
    except for (fun) replaced by (fun$_{\prec f}$) in one case, and
    (const) and (dp) in the other case.
    \begin{description}
     \item[(fun$_{\prec f}$)] We have $g\in\I{\T_g}$ by the first induction
       hypothesis on $g$.
     \item[(const)] Since $g$ is undefined, it is neutral and normal.
       Therefore, it belongs to every computability predicate and in
       particular to $\I{\T_g}_\s$.
     \item[(dp)] By the third induction hypothesis, $y_i\g\s\in\I{U_i\g}_\s$. By
       Lemma \ref{lem-props}\ref{lem-int-subs},
       $\I{U_i\g}_\s=\I{U_i}_{\g\s}$. So, $\g\s\models\S$ and
       $g\vy\g\s\in\bC$. Now, by condition (c),
       $g\vy\g\s\tilde{<}f\vl\s$ since $g\vy\g<f\vl$. Therefore, by
       the second induction hypothesis, $g\vy\g\s\in\I{V\g}_\s$.
   \end{description}
   So, $r\s\in\I{V\pi}_\s$ and, by Lemma \ref{lem-props}\ref{lem-int-prod},
   $u\in\I{U}_{[x_n\to t_n,..,x_{k+1}\to t_{k+1},\pi\s]}=\I{U}_\t$.\qedhere
  \end{itemize}
\end{proof}

Note that the proof still works if one replaces the relation $\succeq$ of Definition \ref{def-preced} by any well-founded quasi-order such that $f\succeq g$ whenever $f\vl>g\vm$.
The quasi-order of Definition \ref{def-preced}, defined syntactically, relieves the user of the burden of providing one and is sufficient in every practical case met by the authors.
However it is possible to construct ad-hoc systems which require a quasi-order richer than the one presented here.

By combining the previous lemma and the Adequacy lemma (the identity
substitution is computable), we get the main result of the
paper:

\begin{theorem}\label{thm-dp}
  The relation ${\a}={{\ab}\cup{\ar}}$ terminates on terms typable in
  $\l\Pi/\cR$ if $\a$ is locally confluent and preserves typing, $\cR$
  is well-structured and accessible, and $\call$ terminates.
\end{theorem}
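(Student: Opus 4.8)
The plan is to obtain the theorem as a direct corollary of the two main lemmas already established, since all of the substantive work has been carried out in the interpretation machinery of Section~\ref{sec-interp} and in Lemma~\ref{lem-comp-fun}. The hypotheses of the theorem are precisely those needed to invoke both lemmas: local confluence and preservation of typing are the standing assumptions under which Definition~\ref{def-interp} and the Adequacy lemma (Lemma~\ref{lem-comp}) are valid, while well-structuredness, accessibility, and termination of $\call$ are exactly the hypotheses of Lemma~\ref{lem-comp-fun}.

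First I would apply Lemma~\ref{lem-comp-fun}: since $\call$ terminates and $\cR$ is well-structured and accessible, the typing map $\T$ is adequate, which discharges the sole nontrivial premise of the Adequacy lemma. Next, let $t$ be any term typable in $\l\Pi/\cR$, say $\G\th t:T$, and take $\s$ to be the identity substitution. To apply Lemma~\ref{lem-comp} I must check that $\s\models\G$, which for each $x:A\in\G$ reduces to showing $x\in\I{A}$. By Lemma~\ref{lem-props}\ref{lem-comp-pred-int} we have $\I{A}\in\bP$, and since a computability predicate contains every neutral term in normal form, it contains every variable; hence $x\in\I{A}$ and the identity substitution is adequate. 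The Adequacy lemma then gives $t=t\s\in\I{T}_\s=\I{T}$.

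Finally, to conclude termination I would again use Lemma~\ref{lem-props}\ref{lem-comp-pred-int} to get $\I{T}\in\bP$, and then clause~(a) of Definition~\ref{def-comp} yields $\I{T}\sle\SN$. Therefore $t\in\SN$, i.e.\ $\a={\ab}\cup{\ar}$ terminates on $t$; as $t$ was an arbitrary typable term, the relation terminates on all terms typable in $\l\Pi/\cR$. I do not expect any genuine obstacle at this stage: the theorem is essentially a bookkeeping step recording the reduction of strong normalization to the absence of infinite dependency-pair chains. The real difficulty is concentrated upstream, namely in the fixpoint construction of $\cI$ in Definition~\ref{def-interp} and in the nested inductions of Lemma~\ref{lem-comp-fun} that handle the computability-closure derivation $\thfl$.
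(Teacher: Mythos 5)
Your proposal is correct and follows exactly the paper's own argument: the paper proves Theorem \ref{thm-dp} by combining Lemma \ref{lem-comp-fun} with the Adequacy lemma applied to the identity substitution, which is adequate because every variable, being a neutral term in normal form, belongs to every computability predicate. Your spelled-out details (using Lemma \ref{lem-props}\ref{lem-comp-pred-int} and clause (a) of Definition \ref{def-comp} to conclude $\I{T}\sle\SN$) are precisely the bookkeeping the paper leaves implicit.
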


For the sake of completeness, we are now going to give sufficient
conditions for accessibility and termination of $\call$ to hold, but
one could imagine many other criteria.


\section{Checking accessibility}
\label{sec-acc}

In this section, we give a simple condition to ensure accessibility
and some hints on how to modify the interpretation when this condition is
not satisfied.

As seen with the definition of accessibility, the main problem is to
deal with subterms of higher-order type. A simple condition is
to require higher-order variables to be direct subterms of
the left-hand side, a condition called plain function-passing (PFP) in
\cite{kusakari07aaecc}, and satisfied by Example \ref{expl-list-poly}.

\begin{definition}[PFP systems]\label{def-pfp-rule}
A well-structured $\cR$ is PFP if, for all $f\vl\a r\in\cR$ with
$\T_f=\prod\vx\vT U$ and $|\vx|=|\vl|$, $\vl\notin\bK\cup\{\kind\}$ and, for
all $y:T\in\D_{f\vl\a r}$, there is $i$ such that $y=l_i$ and
$T=T_i[\vx\to\vl]$, or else $y\in\FV(l_i)$ and $T=D\vt$ with $D$
undefined and $|\vt|=\arit(D)$.
\end{definition}

\begin{lemma}\label{lem-pfp-rule}
PFP systems are accessible.
\end{lemma}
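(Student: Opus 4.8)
The plan is to prove that PFP systems are accessible.The plan is to fix a rule $f\vl\a r\in\cR$ with $\T_f=\prod\vx\vT U$ and $|\vx|=|\vl|$, together with a substitution $\s$ satisfying the hypotheses of Definition \ref{def-valid-rule}, and to verify $\s\models\D_{f\vl\a r}$ by checking, for each $(y:T)\in\D_{f\vl\a r}$, that $y\s\in\I{T}_\s$. Writing $\pi=[\vx\to\vl]$, the hypothesis $[\vx\to\vl]\s\models\vx:\vT$ unfolds to $l_i\s=x_i\pi\s\in\I{T_i}_{\pi\s}$ for every $i$. The PFP condition of Definition \ref{def-pfp-rule} then splits the verification into its two cases, which I would treat separately.

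In the first case $y=l_i$ (so $l_i$ is literally the variable $y$) and $T=T_i[\vx\to\vl]=T_i\pi$, and I must show $l_i\s\in\I{T_i\pi}_\s$. Here I would invoke Lemma \ref{lem-props}\ref{lem-int-subs} with $U=T_i$ and $\g=\pi$ to rewrite $\I{T_i\pi}_\s=\I{T_i}_{\pi\s}$, after which the goal is exactly the unfolded hypothesis $l_i\s\in\I{T_i}_{\pi\s}$. The typing premises of that lemma come from well-typedness of the left-hand side $f\vl$ (each $l_i:T_i\pi$ in $\D_{f\vl\a r}$) and of $\T_f$; the remaining premise $T_i\pi\s\in\cD$ I would derive from the hypothesis $\T_f\in\I{s_f}=\cD$ by propagating the domain condition of $\cD$ along the telescope $\vx:\vT$, feeding in the computable arguments $l_1\s,\ldots,l_{i-1}\s$ one level at a time.

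The second case is the heart of the matter. Here $y\in\FV(l_i)$ for some $i$ and $T=D\vt$ with $D$ undefined and $|\vt|=\arit(D)$. Because $D$ is undefined and fully applied, $(D\vt)\s$ cannot be reduced at the head, so it is not $\type$ and none of its reducts is a product; hence $D\vt\notin\bK\cup\{\kind\}$ and, whether or not $(D\vt)\s\in\cD$, the interpretation collapses to $\I{T}_\s=\SN$. It then suffices to show $y\s\in\SN$. Since $y\in\FV(l_i)$, the term $y\s$ occurs as a subterm of $l_i\s$; and $l_i\s\in\I{T_i}_{\pi\s}\sle\SN$, because every interpretation is a computability predicate (Lemma \ref{lem-props}\ref{lem-comp-pred-int}) and computability predicates are contained in $\SN$. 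As subterms of terminating terms are terminating, $y\s\in\SN=\I{T}_\s$.

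The main obstacle is precisely this second case, and it explains why PFP is formulated as it is: in general the subterm relation does \emph{not} preserve computability (witness the term $\w$ discussed just before the lemma), so one cannot hope to prove $y\s$ computable merely from computability of $l_i\s$. PFP circumvents this by guaranteeing that any pattern variable buried strictly inside an argument carries a type $D\vt$ built from an undefined head, whose interpretation is the full set $\SN$, for which closure under subterms is automatic; higher-order variables, whose interpretation is genuinely smaller than $\SN$, are instead forced to be \emph{direct} arguments and so fall under the first case. A secondary, more routine, difficulty is the bookkeeping of the first case, namely verifying $T_i\pi\s\in\cD$ and the typing premises needed to apply the substitution lemma, which requires carefully unwinding the definition of $\cD$ and distinguishing objects from kinds in $\vx:\vT$.
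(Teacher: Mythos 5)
Your overall architecture coincides with the paper's proof: the same two-case split following Definition \ref{def-pfp-rule}, the same derivation of $T_i\pi\s\in\cD$ by propagating the domain condition of $\cD$ down the telescope, and your entire second case (undefined, fully applied head, so the interpretation collapses to $\SN$; subterms of terminating terms terminate) is exactly the paper's argument. The gap is in your first case. To get $\I{T_i\pi}_\s=\I{T_i}_{\pi\s}$ you invoke Lemma \ref{lem-props}\ref{lem-int-subs}, whose premises include $\G\th\pi:\D$ for (a prefix of) the telescope $\D$, i.e.\ judgments $\D_{f\vl\a r}\th l_j:T_j\pi$ for the left-hand side arguments $l_j$; you discharge these by appealing to ``well-typedness of the left-hand side $f\vl$''. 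But typability of left-hand sides is assumed nowhere in the paper: condition (d) of Definition \ref{def-wf-rule} only requires the \emph{right}-hand side to be typable in $\thfl$, and in this framework left-hand sides are in general not well typed. Concretely, in Example \ref{expl-list-poly} the rule with left-hand side {\tt app a \_ (cons \_ x p l) q m} has a wildcard as the first argument of {\tt cons}: reading it as a fresh variable $a'$, typing this pattern requires $\mt{El}\,a'$ to be convertible with $\mt{El}\,a$ (the type declared for {\tt x} in $\D_{f\vl\a r}$), which fails since $a'$ and $a$ are distinct variables. So the typing premise you rely on is not available, and is actually false for the very systems the lemma is meant to cover.

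The repair is the paper's argument, which avoids any typing of $\vl$: unfold Definition \ref{def-interp} directly instead of going through Lemma \ref{lem-props}\ref{lem-int-subs}. Since $\th\T_f:s_f$, each domain $T_i$ is typed by $\type$ in the telescope prefix, so $T_i\notin\bK\cup\{\kind\}$; and since PFP additionally requires $\vl\notin\bK\cup\{\kind\}$ --- a clause of Definition \ref{def-pfp-rule} that your proof never uses, and this is precisely where it is needed --- the substituted type also satisfies $T_i\pi\notin\bK\cup\{\kind\}$. Then, given $T_i\pi\s\in\cD$ (your telescope argument), both $\I{T_i}_{\pi\s}$ and $\I{T_i\pi}_\s$ are, by the third clause of the interpretation, literally the same set $\cI(T_i\pi\s)$, and the hypothesis $l_i\s\in\I{T_i}_{\pi\s}$ immediately gives the conclusion. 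With that replacement your proof goes through; everything else in it is sound.
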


\begin{proof}
  Let $f\vl\a r$ be a PFP rule with $\T_f=\all\G,U$, $\G=\vx:\vT$,
  $\pi={[\vx\to\vl]}$. Following Definition \ref{def-valid-rule}, assume
  that $\th\T_f:s_f$, $\T_f\in\cD$ and $\pi\s\models\G$. We have
  to prove that, for all $(y:T)\in\D_{f\vl\a r}$, $y\s\in\I{T}_\s$.
  \begin{itemize}
  \item Suppose $y=l_i$ and $T=T_i\pi$. Then,
    $y\s=l_i\s\in\I{T_i}_{\pi\s}$. Since $\th\T_f:s_f$,
    $T_i\notin\bK\cup\{\kind\}$. Since $\T_f\in\cD$ and
    $\pi\s\models\G$, we have $T_i\pi\s\in\cD$. So,
    $\I{T_i}_{\pi\s}=\cI(T_i\pi\s)$. Since $T_i\notin\bK\cup\{\kind\}$ and
    $\vl\notin\bK\cup\{\kind\}$, $T_i\pi\notin\bK\cup\{\kind\}$. Since
    $T_i\pi\s\in\cD$, $\I{T_i\pi}_\s=\cI(T_i\pi\s)$. Thus,
    $y\s\in\I{T}_\s$.
  \item Suppose $y\in\FV(l_i)$ and $T$ is of the form $C\vt$ with
    $|\vt|=\arit(C)$. Then, $\I{T}_\s=\SN$ and $y\s\in\SN$ since
    $l_i\s\in\I{T_i}_\s\sle\SN$.\qedhere
  \end{itemize}
\end{proof}

But many accessible systems are not PFP. They can be
proved accessible by changing the interpretation of type constants (a
complete development is left for future work).

\begin{example}[Recursor on Brouwer ordinals]\label{expl-ord}\hfill
 \begin{lstlisting}[mathescape=true]
symbol Ord: TYPE
 symbol zero: Ord  symbol suc: Ord$\A$Ord     symbol lim: (Nat$\A$Ord)$\A$Ord

symbol ordrec: A$\A$(Ord$\A$A$\A$A)$\A$((Nat$\A$Ord)$\A$(Nat$\A$A)$\A$A)$\A$Ord$\A$A
  rule ordrec u v w zero    $\alp$ u
  rule ordrec u v w (suc x) $\alp$ v x (ordrec u v w x)
  rule ordrec u v w (lim f) $\alp$ w f ($\lambda$n,ordrec u v w (f n))
\end{lstlisting}
\end{example}

  The above example is not PFP because {\tt f:Nat$\A$Ord} is not
  argument of {\tt ordrec}.
Yet, it is accessible if one takes for $\I{\tt Ord}$ the least fixpoint
of the monotone function $F(S)=\{{t\in\SN\mid} \text{if }t\a^*{\tt lim}\,f\text{ then }
f\in\I{\tt Nat}\A S\text{, and if } t\a^*{\tt suc}\,u\text{ then } u\in S\}$ \cite{blanqui05mscs}.

Similarly, the following encoding of the simply-typed $\l$-calculus is
not PFP but can be proved accessible by taking
\[\I{{\tt T}~
  c}=\text{if }\nf{c}={\tt arrow}\,a\,b\text{ then }\{t\in\SN\mid \text{if }t\a^*{\tt
  lam} f\text{ then }f\in\I{{\tt T}~a}\A\I{{\tt T}~b}\}\text{ else }\SN\]

\begin{example}[Simply-typed $\l$-calculus]\label{expl-lambda}\hfill
\begin{lstlisting}[mathescape=true]
symbol Sort : TYPE              symbol arrow : Sort $\A$ Sort $\A$ Sort

symbol T : Sort $\A$ TYPE
  symbol lam : $\forall$ a b, (T a $\A$ T b) $\A$ T (arrow a b)
  symbol app : $\forall$ a b, T (arrow a b) $\A$ T a $\A$ T b
  rule app a b (lam _ _ f) x $\alp$ f x
\end{lstlisting}
\end{example}

\hide{
Note here, that erasing dependencies in types, would lead to an encoding of
\emph{pure} $\lambda$-calculus, which is well-known to be non-normalizing.
}


\section{Size-change termination}
\label{sec-sct}

In this section, we give a sufficient condition for $\call$ to terminate.
For first-order rewriting,
many techniques have been developed for that purpose. To cite just a
few, see for instance \cite{hirokawa07ic,giesl06jar}.\hide{thiemann07phd}
Many of them can probably be extended to $\l\Pi/\cR$,
either because the structure of terms in which they are expressed can be
abstracted away, or because they can be extended to deal also with
variable applications, $\l$-abstractions and $\b$-reductions.
\hide{\cite{suzuki11pro,kop12phd}}

As an example, following Wahlstedt \cite{wahlstedt07phd}, we are going
to use Lee, Jones and Ben-Amram's size-change termination criterion
(SCT) \cite{lee01popl}. It consists in following arguments along
function calls and checking that, in every potential loop, one of them
decreases. First introduced for first-order functional languages, it
has then been extended to many other settings: untyped $\l$-calculus
\cite{jones04rta}, a subset of \tool{OCaml} \cite{sereni05aplas},
Martin-L\"of's type theory \cite{wahlstedt07phd}, System F
\cite{lepigre17draft}.

We first recall Hyvernat and Raffalli's matrix-based presentation of
SCT \cite{hyvernat10wst}:

\newcommand{\arity}[1]{\arit(\T_#1)}

\begin{definition}[Size-change termination]\label{def-sct}
  Let $\tgt$ be the smallest transitive relation such that $ft_1\dots
  t_n\tgt t_i$ when $f\in\bF$. The call graph $\cG(\cR)$ associated to
  $\cR$ is the directed labeled graph on the defined symbols of $\bF$
  such that there is an edge between $f$ and $g$ iff there is a
  dependency pair $fl_1\dots l_p>gm_1\dots m_q$. This edge is labeled
  with the matrix $(a_{i,j})_{i\leq\arity{f},j\leq\arity{g}}$ where:
  \begin{itemize}
    \item if $l_i\tgt m_j$, then $a_{i,j}=-1$;
    \item if $l_i=m_j$, then $a_{i,j}=0$;
    \item otherwise $a_{i,j}=\infty$ (in particular if $i>p$ or $j>q$).
  \end{itemize}
  $\cR$ is size-change terminating (SCT) if, in the transitive closure
  of $\cG(\cR)$ (using the min-plus semi-ring to multiply the matrices
  labeling the edges), all idempotent matrices labeling a loop have
  some $-1$ on the diagonal.
\end{definition}

We add lines and columns of $\infty$'s in matrices associated to
dependency pairs containing partially applied symbols (cases $i>p$ or
$j>q$) because missing arguments cannot be
compared with any other argument since they are arbitrary.

The matrix associated to the dependency pair {\tt C: (s p) + q > p +
  q} and the call graph associated to the dependency pairs of Example
\ref{expl-mult} are depicted in Figure \ref{fig-callgraph}.  The full
list of matrices and the extensive call graph of Example
\ref{expl-list-poly} can be found in Appendix \ref{annex-matrices}.

\begin{figure}\caption{Matrix of dependency pair {\tt C} and call graph of the dependency pairs of Example \ref{expl-mult}\label{fig-callgraph}}
\begin{center}\tt
\begin{tabular}{|@{}r@{}|@{}r@{\,}|c|c|}
    \cline{3-4}
    \multicolumn{2}{c|}{\multirow{2}{*}{\Large C}} & \multicolumn{2}{c|}{\texttt{+}}\\
    \cline{3-4}
    \multicolumn{2}{c|}{} & {\texttt{p}} & {\texttt{q}}\\
    \hline
      \multirow{2}{*}{\rotatebox{90}{\texttt{+}}}
      & {\texttt{\phantom{,}s p}} & $-1$ & $\infty$\\
    \cline{2-4}
      & {\texttt{q}} & $\infty$ & $0$\\
    \hline
\end{tabular}
\end{center}
\vspace{-5.5em}
\begin{center}\tt
  \begin{tikzpicture}
    \node[draw] (l_filter) at (2,-2) {len\_fil};
    \node[draw] (l_f_aux) at (-3,-2) {len\_fil\_aux};
    \node[draw] (el) at (-3.5,0) {El};
    \node[draw] (app) at (5,0) {app};
    \node[draw] (pl) at (5,-2) {+};
    \draw[>=latex,->] (el) to[out=135,in=90] (-4.5,0) to[out=-90,in=-135] (el);
    \node[left] (ell) at (-4.5,0) {A,B};
    \draw[>=latex,->] (pl) to[out=45,in=90] (6,-2) to[out=-90,in=-45] (pl);
    \node[right] (pll) at (6,-2) {C};
    \draw[>=latex,->] (app) to[bend left=8] node[midway,right] {D} (pl);
    \draw[>=latex,->] (app) to[out=45,in=90] (6.2,0) to[out=-90,in=-45] (app);
    \node[right] (appl) at (6.2,0) {E};
    \draw[>=latex,->] (l_filter) to[bend left=8] node[midway,above] {F} (l_f_aux);
    \draw[>=latex,->] (l_filter) to[bend left=8] node[midway,above] {G} (pl);
    \draw[>=latex,->] (1.5,-2.27) to[out=-45,in=0] (1.35,-2.8) to[out=180,in=-150] (1.23,-2.21);
    \node[below] (l_filterl) at (1.35,-2.8) {H,I};
    \draw[>=latex,->] (l_f_aux) to[bend left=8] node[midway,above] {J,K} (l_filter);
    \draw[>=latex,->,densely dashed] (l_f_aux) to[out=-45,in=0] (-3,-2.8) to[out=180,in=-135] (l_f_aux);
    \node[below] (l_f_auxTC) at (-3,-2.8) {TC${}_1$};
    \draw[>=latex,->,densely dashed] (2.77,-2.21) to[out=-30,in=0] (2.7,-2.8) to[out=180,in=-135] (2.55,-2.27);
    \node[below] (l_filterTC) at (2.7,-2.8) {TC${}_2$};
  \end{tikzpicture}
\end{center}
\end{figure}

\begin{lemma}\label{lem-sct}
  $\call$ terminates if $\bF$ is finite and $\cR$ is SCT.
\end{lemma}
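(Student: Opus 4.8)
The plan is to argue by contradiction, reducing an infinite $\call$-chain first to an infinite path in the finite call graph $\cG(\cR)$, then --- via the size-change hypothesis and Ramsey's theorem --- to an infinitely decreasing \emph{thread}, and finally to an infinite descent for the relation ${\a}\cup{\tgt}$ on terminating terms, which I will show cannot exist. Suppose $\call$ does not terminate, so that there is an infinite sequence $t_0\call t_1\call\cdots$ with every $t_n\in\bC$. Each step $t_{n-1}\call t_n$ is an $\a_\arg^*$-reduction followed by a single $>_s$-step, and the latter uses exactly one dependency pair $f\vl>g\vm$ together with a substitution $\s_n$; since $\a_\arg$ never changes the head symbol, the heads of the $t_n$ trace an infinite path $e_1,e_2,\dots$ in $\cG(\cR)$, where $e_n$ is the edge (dependency pair) fired at step $n$. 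Because $\bF$ is finite, some symbol $f$ occurs as the head of infinitely many $t_n$; let $i_0<i_1<\cdots$ enumerate those positions.

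For $a<b$ I compose, in the min-plus semiring, the matrices labelling $e_{i_a+1},\dots,e_{i_b}$ into a square matrix $M_{a,b}$ over $\{-1,0,\infty\}$ of size $\arit(\T_f)$. There are only finitely many such matrices, so colouring each pair $\{a,b\}$ by $M_{a,b}$ and applying the infinite Ramsey theorem yields an infinite set of indices on which $M_{a,b}$ is constant, equal to some $M$. Associativity of composition gives $M=M\cdot M$, so $M$ is idempotent and labels a loop in the transitive closure of $\cG(\cR)$; by the SCT hypothesis it carries an entry $M_{pp}=-1$ on its diagonal. In the min-plus semiring an entry equal to $-1$ rather than $\infty$ witnesses, within each segment delimited by the homogeneous set, a thread of argument positions running from position $p$ to position $p$ that avoids $\infty$-labels and contains at least one $-1$-label. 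Concatenating these segment-threads over the (reindexed) homogeneous set produces a single infinite thread passing through position $p$ infinitely often and using infinitely many $-1$-transitions. This is the soundness argument of Lee, Jones and Ben-Amram, and it is the technical core of the proof.

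It remains to turn this abstract descent into a genuine one. Along the thread I track the corresponding argument terms $v_0,v_1,\dots$. For an edge $e_n=(f\vl>g\vm,\s_n)$ carrying the thread from argument index $k$ to index $k'$, the definition of $>_s$ gives $v_n\a^* l_k\s_n$ and $v_{n+1}=m_{k'}\s_n$ for the relevant arguments; hence a $0$-label ($l_k=m_{k'}$) yields $v_n\a^* v_{n+1}$, while a $-1$-label ($l_k\tgt m_{k'}$) yields $v_n\,(\a^*\cdot\tgt)\,v_{n+1}$, using that $\tgt$ is preserved by substitution. Therefore the $v_n$ form an infinite ${\a}\cup{\tgt}$-chain containing infinitely many $\tgt$-steps. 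Moreover $v_0$ is an argument of some $t_n\in\bC$, so by $\s\models\G_f$ and Lemma \ref{lem-props}\ref{lem-comp-pred-int} it lies in a computability predicate, whence $v_0\in\SN$.

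Finally I contradict this by proving that ${\a}\cup{\tgt}$ is well-founded on $\SN$. Reading $s\tgt u$ as ``$u$ is a strict subterm of $s$'', the relation $\tgt$ is contained in the strict subterm ordering, and $\SN$ is closed both under reduction and under taking subterms; moreover $\a$ is finitely branching, so every $t\in\SN$ has a finite maximal reduction length $h(t)$. Since reducing a subterm in place is a reduction of the whole term, $s\tgt u$ implies $h(u)\le h(t)$ while $|u|<|t|$, whereas $t\a u$ implies $h(u)<h(t)$; a lexicographic induction on $(h(t),|t|)$ then shows that every $({\a}\cup{\tgt})$-successor of an $\SN$-term is again strongly normalizing for ${\a}\cup{\tgt}$, so $\SN$ is closed under ${\a}\cup{\tgt}$-termination. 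This makes the infinite chain $v_0\,({\a}\cup{\tgt})\,v_1\,\cdots$ impossible and concludes the proof. The main obstacle is the Ramsey step together with the faithful decoding of an idempotent min-plus matrix into an honest infinitely decreasing thread on actual terms; the computability machinery enters only to guarantee that the tracked arguments start in $\SN$.
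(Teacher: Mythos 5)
Your proof follows the paper's own argument step for step: assume an infinite $\call$-sequence, project it onto an infinite path in the call graph, use finiteness of $\bF$ to fix a symbol $g$ recurring infinitely often, apply Ramsey's theorem to the matrices labelling the segments between occurrences of $g$ to obtain a single idempotent matrix $M$, invoke the SCT hypothesis to get a $-1$ diagonal entry, and decode that entry into an infinite descent $(\a^*\tgt)^+$ among the tracked arguments, which starts from a strongly normalizing term because arguments of terms in $\bC$ are computable. Your decoding of the idempotent matrix into an explicit infinite thread with infinitely many $-1$-transitions is in fact more detailed than the paper, which directly asserts $u^{(j)}_{\phi_{i}}(\a^*\tgt)^+ u^{(j)}_{\phi_{i+1}}$; that part is correct.

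The one place you diverge is the final well-foundedness argument, and there you have a gap. You rule out the infinite ${\a}\cup{\tgt}$-chain by a lexicographic induction on $(h(t),|t|)$, where $h(t)$ is the maximal reduction length of $t$; this measure is well defined only when $\a$ is finitely branching, which you assert without justification. The lemma assumes only that $\bF$ is finite: $\cR$ may be an infinite set of rules, in which case $\ar$ need not be finitely branching, and a strongly normalizing term can then have reductions of unbounded (though each finite) length, so $h$ is undefined (e.g.\ with the infinitely many rules $a\a s^n(b)$ together with $s(x)\a x$, the term $a$ is in $\SN$ but has no maximal reduction length). The paper closes this step instead via the commutation ${\tgt\a}\subseteq{\a\tgt}$ combined with well-foundedness of $\a_\arg$ on $\bC$ and of $\tgt$, which needs no branching assumption. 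Your argument can be repaired without finite branching: prove, by well-founded induction on $t\in\SN$ ordered by $\a$ with a nested induction on the subterm order, that every subterm of every term of $\SN$ is ${\a}\cup{\tgt}$-terminating (a $\tgt$-step stays inside the same ambient term and decreases the subterm component; an $\a$-step inside a subterm lifts to an $\a$-step of the ambient term, which decreases the first component). With that replacement, your proof is a complete and faithful rendering of the paper's.
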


\begin{proof}
 Suppose that there is an infinite sequence $\chi=f_1\vt_1\call
 f_2\vt_2\call\dots$ Then, there is an infinite path in the call
 graph going through nodes labeled by $f_1,f_2,\dots$ Since $\bF$ is
 finite, there is a symbol $g$ occurring infinitely often in this
 path. So, there is an infinite sequence $g\vu_1,g\vu_2,\dots$
 extracted from $\chi$. Hence, for every $i,j\in\bN^*$, there is a
 matrix in the transitive closure of the graph which labels the loops
 of $g$ corresponding to the relation between $\vu_i$ and
 $\vu_{i+j}$. By Ramsey's theorem\hide{\cite{ramsey30plms}}, there is an
 infinite sequence $(\phi_i)$ and a matrix $M$ such that $M$ corresponds
 to all the transitions $g\vu_{\phi_i},g\vu_{\phi_j}$ with
 $i\neq j$. $M$ is idempotent, indeed
 $g\vu_{\phi_i},g\vu_{\phi_{i+2}}$ is labeled by $M^2$ by
 definition of the transitive closure and by $M$ due to Ramsey's
 theorem, so $M=M^2$. Since, by hypothesis, $\cR$ satisfies SCT, there
 is $j$ such that $M_{j,j}$ is $-1$. So, for all $i$,
 $u^{(j)}_{\phi_{i}}(\a^*\tgt)^+ u^{(j)}_{\phi_{i+1}}$.
 Since ${\tgt\!\a}\subseteq{\a\!\tgt}$ and $\a_{\arg}$ is well-founded on $\bC$,
 the existence of an infinite sequence contradicts the fact that $\tgt$ is well-founded.
\end{proof}

By combining all the previous results, we get:

\begin{theorem}\label{thm-sn}
  The relation ${\a}={{\ab}\cup{\ar}}$ terminates on terms typable in
  $\l\Pi/\cR$ if $\a$ is locally confluent and preserves typing, $\bF$
  is finite and $\cR$ is well-structured, plain-function passing and
  size-change terminating.
\end{theorem}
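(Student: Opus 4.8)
The plan is to assemble the result directly from the pieces already established, since Theorem~\ref{thm-sn} is essentially a corollary of Theorem~\ref{thm-dp}. The hypotheses of the present statement are chosen precisely so that the two checkable criteria developed in Sections~\ref{sec-acc} and~\ref{sec-sct} discharge the two semantic assumptions of Theorem~\ref{thm-dp} that are otherwise hard to verify directly, namely accessibility and termination of $\call$. So the whole proof reduces to invoking the right lemma for each hypothesis and then appealing to Theorem~\ref{thm-dp}.

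Concretely, I would proceed in three observations. First, by Definition~\ref{def-pfp-rule} plain-function passing is a property asserted of an already well-structured system, so the joint assumption that $\cR$ is well-structured and plain-function passing is coherent and in particular gives that $\cR$ is well-structured. Second, since $\cR$ is plain-function passing, Lemma~\ref{lem-pfp-rule} yields that $\cR$ is accessible. Third, since $\bF$ is finite and $\cR$ is size-change terminating, Lemma~\ref{lem-sct} yields that $\call$ terminates.

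At this point all the hypotheses of Theorem~\ref{thm-dp} hold: $\a$ is locally confluent and preserves typing by assumption, $\cR$ is well-structured and accessible by the first two observations, and $\call$ terminates by the third. Applying Theorem~\ref{thm-dp} then gives that ${\a}={{\ab}\cup{\ar}}$ terminates on terms typable in $\l\Pi/\cR$, which is exactly the claim.

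I expect no genuine obstacle in this final step: all the mathematical content resides in the earlier results --- the reducibility-candidate model and the Adequacy lemma (Lemma~\ref{lem-comp}), the reduction of adequacy to termination of $\call$ (Lemma~\ref{lem-comp-fun}), and the two automatically verifiable criteria (Lemmas~\ref{lem-pfp-rule} and~\ref{lem-sct}). If one had to point to where the real difficulty was overcome, it would be inside Lemma~\ref{lem-comp-fun}, whose nested inductions on the precedence $\succ$, on ${\a_\arg}\cup{\call}$, and on the structure of $\thfl$-derivations carry the weight of the argument; the present theorem merely repackages these into a single user-facing statement whose hypotheses are syntactic and can be checked mechanically by \tool{Dedukti}.
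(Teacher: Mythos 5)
Your proposal is correct and matches the paper exactly: Theorem~\ref{thm-sn} is stated there with no separate proof beyond the phrase ``by combining all the previous results,'' i.e.\ precisely your assembly of Lemma~\ref{lem-pfp-rule} (PFP implies accessible), Lemma~\ref{lem-sct} (finite $\bF$ plus SCT implies termination of $\call$), and Theorem~\ref{thm-dp}. Your closing remark that the real mathematical weight sits in Lemma~\ref{lem-comp-fun} is also an accurate reading of the paper's structure.
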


The rewriting system of Example \ref{expl-list-poly} verifies all these conditions (proof in the appendix).


\section{Implementation and comparison with other criteria and tools}
\label{sec-comp}

We implemented our criterion in a tool called \tool{SizeChangeTool}
\cite{sizechangetool}. As far as we know, there are no other
termination checker for $\l\Pi/\cR$.

If we restrict ourselves to simply-typed rewriting systems, then we
can compare it with the termination checkers participating in the
category ``higher-order rewriting union beta'' of the \href{http://termination-portal.org/wiki/Termination\_Competition}{termination
competition}\hide{\cite{tc}}: \href{http://wandahot.sourceforge.net/}{\tool{Wanda}}\hide{\cite{wanda}} uses dependency
pairs, polynomial interpretations, HORPO and many transformation
techniques \cite{kop12phd}; \tool{SOL} uses the General Schema
\cite{blanqui16tcs} and other techniques\hide{; \tool{THOR} uses some semantic version of HORPO
  \cite{blanqui07jacm}}. As these tools implement various techniques
and \tool{SizeChangeTool} only one, it is difficult to compete with
them. Still, there are examples that are solved by
\tool{SizeChangeTool} and not by one of the other tools, demonstrating
that these tools would benefit from implementing our new technique. For
instance, the problem {\tt Hamana\_Kikuchi\_18/h17}\hide{\cite{tpdb}} is
proved terminating by \tool{SizeChangeTool} but not by \tool{Wanda}
because of the rule:
\begin{lstlisting}[mathescape=true]
 rule map f (map g l) $\ra$ map (comp f g) l
\end{lstlisting}
\noindent
And the problem {\tt Kop13/kop12thesis\_ex7.23} is proved terminating
by \tool{SizeChangeTool} but not by \tool{Sol} because of the
rules:\footnote{We renamed the function symbols for the sake of
  readability.}
\begin{lstlisting}[mathescape=true]
 rule f h x (s y) $\ra$ g (c x (h y)) y     rule g x y $\ra$ f ($\lambda$_,s 0) x y
\end{lstlisting}

One could also imagine to translate a termination problem in
$\l\Pi/\cR$ into a simply-typed termination problem. Indeed, the
termination of $\l\Pi$ alone (without rewriting) can be reduced to the
termination of the simply-typed $\l$-calculus
\cite{harper93jacm}. This has been extended to $\l\Pi/\cR$ when there
are no type-level rewrite rules like the ones defining {\tt El} in
Example \ref{expl-list-poly} \cite{jouannaud15tlca}.  However,
this translation does not preserve termination as shown by
the Example \ref{expl-lambda} which is not terminating if all the
types $\bT x$ are mapped to the same type constant.

In \cite{roux11rta}, Roux also uses dependency pairs for the
termination of simply-typed higher-order rewriting systems, as well as a
restricted form of dependent types where a type constant $C$ is
annotated by a pattern $l$ representing the set of terms matching
$l$. This extends to patterns the notion of indexed or sized types
\cite{hughes96popl}. Then, for proving the absence of infinite chains,
he uses simple projections \cite{hirokawa07ic}, which can be seen as a
particular case of SCT where strictly decreasing arguments are fixed
(SCT can also handle permutations in arguments).

Finally, if we restrict ourselves to orthogonal systems, it is also
possible to compare our technique to the ones implemented in the proof
assistants \tool{Coq} and \tool{Agda}. \tool{Coq} essentially
implements a higher-order version of primitive recursion. \tool{Agda}
on the other hand uses SCT.

Because Example \ref{expl-list-poly} uses matching on defined symbols,
it is not orthogonal and can be written neither in \tool{Coq} nor in
\tool{Agda}. \tool{Agda} recently added the possibility of adding
rewrite rules but this feature is highly experimental and comes with
no guaranty. In particular, \tool{Agda} termination checker does not
handle rewriting rules.

\tool{Coq} cannot handle inductive-recursive definitions
\cite{dybjer00jsl} nor function definitions with permuted arguments in
function calls while it is no problem for \tool{Agda} and us.


\section{Conclusion and future work}
\label{sec-conclu}

We proved a general modularity result extending Arts and
Giesl's theorem that a rewriting relation terminates if there are
no infinite sequences of dependency pairs \cite{arts00tcs} from first-order
rewriting to dependently-typed higher-order rewriting. Then,
following \cite{wahlstedt07phd}, we showed how to use Lee, Jones and
Ben-Amram's size-change termination criterion to prove the absence of
such infinite sequences \cite{lee01popl}.

This extends Wahlstedt's work \cite{wahlstedt07phd} from weak to
strong normalization, and from orthogonal to locally confluent rewriting
systems. This extends the first author's work \cite{blanqui05mscs}
from orthogonal to locally confluent systems, and from systems having
a decreasing argument in each recursive call to systems with
non-increasing arguments in recursive calls. Finally, this also
extends previous works on static dependency pairs
\cite{kusakari07aaecc}\hide{suzuki11pro}\hide{kop11rta} from simply-typed $\l$-calculus to
dependent types modulo rewriting.

To get this result, we assumed local confluence. However, one often
uses termination to check (local) confluence. Fortunately, there are
confluence criteria not based on termination. The most famous one is
(weak) orthogonality, that is, when the system is left-linear and has
no critical pairs (or only trivial ones) \cite{oostrom94phd}, as it is
the case in functional programming languages. A more general one is
when critical pairs are ``development-closed'' \cite{oostrom97tcs}.

This work can be extended in various directions.

First, our tool is currently limited to PFP rules,
that is, to rules where higher-order variables are direct subterms of
the left-hand side. To have higher-order variables in deeper subterms
like in Example \ref{expl-ord},
we need to define a more complex interpretation of types, following
\cite{blanqui05mscs}.

Second, to handle recursive calls in such systems, we also
need to use an ordering more complex than the subterm ordering
when computing the matrices labeling the SCT call graph.
The ordering needed for handling Example \ref{expl-ord} is the
``structural ordering'' of \tool{Coq} and \tool{Agda}
\cite{coquand92types,blanqui16tcs}. Relations other than subterm have
already been considered in SCT but in a first-order setting only
\cite{thiemann05aaecc}.

But we may want to go further because the structural ordering is not
enough to handle the following system which is not accepted by
\tool{Agda}:

\begin{example}[Division]\label{expl-div}
  $m${\tt/}$n$ computes $\lceil\frac{m}{n}\rceil$.
\begin{lstlisting}[mathescape=true]
symbol minus: Nat$\A$Nat$\A$Nat             set infix 1 "-" $\coloneqq$ minus
  rule 0 - n $\ra$ 0       rule m - 0 $\ra$ m       rule (s m) - (s n) $\ra$ m - n
symbol div: Nat$\A$Nat$\A$Nat               set infix 1 "/" $\coloneqq$ div
  rule 0 / (s n) $\ra$ 0    rule (s m) / (s n) $\ra$ s ((m - n) / (s n))
\end{lstlisting}
\end{example}

\hide{\footnote{For the sake of simplicity, we do not use the lists annotated with type and lengths of Example \ref{expl-list-poly}.}}

\hide{
\begin{example}[Map function on Rose trees]\label{expl-div}
\begin{lstlisting}[mathescape=true]
symbol tree : TYPE    symbol list_tree : TYPE
  symbol node : list_tree $\A$ tree
  symbol nil : list_tree
  symbol cons : tree $\A$ list_tree $\A$ list_tree
symbol map_list : (tree $\A$ tree) $\A$ list_tree $\A$ list_tree
  map_list f nil $\ra$ nil
  map_list f (cons x l) $\ra$ cons (f x) (map_list f l)
symbol map : (tree $\A$ tree) $\A$ tree $\A$ tree
  map f (node(cons x l)) $\ra$ node(cons (map f x) (map_list (map f) l))
\end{lstlisting}
\end{example}
}
\hide{Cet exemple est très différent de celui avec div et moins,
puisque cette fois-ci on a un appel récursif partiellement appliqué,
donc il ne suffit pas de modifier SCT,
il faut modifier même notre définition des DP pour faire passer cet exemple.
Je ne suis pas sûr qu'on veuille raconter ça ici.}

A solution to handle this system is to use arguments filterings
(remove the second argument of {\tt -}) or simple projections
\cite{hirokawa07ic}. Another one is to extend the type system with
size annotations as in \tool{Agda} and compute the SCT matrices by
comparing the size of terms instead of their structure
\cite{abel10par,blanqui18jfp}. In our example, the size of {\tt m - n}
is smaller than or equal to the size of {\tt m}. One can deduce this
by using user annotations like in \tool{Agda}, or by using heuristics
\cite{chin01hosc}.

Another interesting extension would be to handle function calls with locally
size-increasing arguments like in the following example:
\begin{lstlisting}[mathescape=true]
     rule f x $\ra$ g (s x)           rule g (s (s x)) $\ra$ f x
\end{lstlisting}
where the number of {\tt s}'s strictly decreases between two calls to
{\tt f} although the first rule makes the number of {\tt s}'s
increase. Hyvernat enriched SCT to handle such systems \cite{hyvernat14lmcs}.

\smallskip
{\bf Acknowledgments.} The authors thank the anonymous referees for their comments, which have improved the quality of this article.

\bibliographystyle{plainurl}
\renewcommand{\em}{}

\newpage
\appendix
\section{Proofs of lemmas on the interpretation}
\label{annex-interp}

\subsection{Definition of the interpretation}

\begin{lemma}\label{lem-F-mon}
 $F$ is monotone wrt inclusion.
\end{lemma}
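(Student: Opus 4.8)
The plan is to prove $F(I)\sle F(J)$ whenever $I\sle J$ in $\bI$. First I would make explicit what inclusion means on partial functions: $I\sle J$ is graph inclusion, that is, $\dom(I)\sle\dom(J)$ together with $I(A)=J(A)$ for every $A\in\dom(I)$. Accordingly there are two things to check: that $\dom(F(I))=D(I)\sle D(J)=\dom(F(J))$, and that $F(I)$ and $F(J)$ take the same value on every $T\in D(I)$.

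For the domain inclusion, I would take $T\in D(I)$ and verify the defining clause of $D(J)$. Such a $T$ is terminating, so only the product case matters: suppose $T$ reduces to some $\prod xAB$. From $T\in D(I)$ we get $A\in\dom(I)\sle\dom(J)$, and since $A\in\dom(I)$ graph inclusion gives $J(A)=I(A)$; hence for every $a\in J(A)=I(A)$ the clause for $D(I)$ yields $B[x\to a]\in\dom(I)\sle\dom(J)$. This is exactly the requirement for $T\in D(J)$.

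For the values, I would fix $T\in D(I)$ and split on $\nf{T}$. If $\nf{T}$ is not a product, both $F(I)(T)$ and $F(J)(T)$ equal $\SN$. If $\nf{T}=\prod xAB$, then $T$ reduces to this product, so $A\in\dom(I)$ and $B[x\to a]\in\dom(I)$ for every $a\in I(A)$; graph inclusion then gives $J(A)=I(A)$ and $J(B[x\to a])=I(B[x\to a])$ for those $a$, so that $F(J)(T)=\product{a}{J(A)}{J(B[x\to a])}=\product{a}{I(A)}{I(B[x\to a])}=F(I)(T)$.

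The argument needs no induction and is a direct unfolding of the definition of $F$. The one point that must be handled correctly — and the real content of the lemma — is that the order on $\bI$ is graph inclusion rather than pointwise set inclusion: values are preserved \emph{exactly} on the smaller domain $\dom(I)$. It is this exactness that lets the universally quantified conditions (ranging over $a\in I(A)=J(A)$) transfer verbatim and that forces an equality, not merely an inclusion, between the two interpreted products.
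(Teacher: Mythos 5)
Your proof is correct and follows essentially the same route as the paper's: first establish that $D$ is monotone (using that inclusion of partial functions is graph inclusion, so $J(A)=I(A)$ on $\dom(I)$), then show $F(I)$ and $F(J)$ agree on $D(I)$ by case analysis on whether $\nf{T}$ is a product. Your closing remark about the exactness of value preservation being the crux is precisely the point the paper makes with ``since $I$ and $J$ are functional relations.''
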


\begin{proof}
  We first prove that $D$ is monotone. Let $I\sle J$ and $T\in D(I)$.
  We have to show that $T\in
  D(J)$. To this end, we have to prove (1) $T\in\SN$ and (2) if
  $T\a^*(x:A)B$ then $A\in\dom(J)$ and, for all $a\in J(A)$,
  $B[x\to a]\in\dom(J)$:
  \begin{enumerate}
  \item Since $T\in D(I)$, we have $T\in\SN$.
  \item Since $T\in D(I)$ and $T\a^*(x:A)B$, we have $A\in\dom(I)$
    and, for all $a\in I(A)$, $B[x\to a]\in\dom(I)$. Since $I\sle J$, we
    have $\dom(I)\sle\dom(J)$ and $J(A)=I(A)$ since $I$ and $J$ are
    functional relations. Therefore, $A\in\dom(J)$ and, for all $a\in
    I(A)$, $B[x\to a]\in\dom(J)$.
  \end{enumerate}

  We now prove that $F$ is monotone. Let $I\sle J$ and $T\in D(I)$.
  We have to show that
  $F(I)(T)=F(J)(T)$. First, $T\in D(J)$ since $D$ is
  monotone.

  If ${\nf{T}}={(x:A)B}$, then $F(I)(T)=\product{a}{I(A)}{I(B[x\to a])}$ and
  $F(J)(T)=\product{a}{J(A)}{J(B[x\to a])}$. Since $T\in D(I)$, we have
  $A\in\dom(I)$ and, for all $a\in I(A)$, $B[x\to a]\in\dom(I)$. Since
  $\dom(I)\sle\dom(J)$, we have $J(A)=I(A)$ and, for all $a\in I(A)$,
  $J(B[x\to a])=I(B[x\to a])$. Therefore, $F(I)(T)=F(J)(T)$.

  Now, if $\nf{T}$ is not a product, then
  $F(I)(T)=F(J)(T)=\SN$.
\end{proof}

\subsection{Computability predicates}

\begin{lemma}\label{lem-comp-pred-dom}
  $\cD$ is a computability predicate.
\end{lemma}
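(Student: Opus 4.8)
The plan is to verify directly the three clauses (a), (b), (c) of Definition~\ref{def-comp} for the set $\cD=D(\cI)$. The one structural fact I will lean on throughout is that $\cI$ is a fixpoint of $F$, so $\cI=F(\cI)$ and hence $\dom(\cI)=D(\cI)=\cD$; this lets me freely convert between ``$T\in\cD$'' and ``$T\in D(\cI)$'', the latter being the concrete unfolding given in Definition~\ref{def-interp}.

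Clause (a), $\cD\sle\SN$, is immediate, since by construction every term in $D(I)$ (for any $I$, and in particular for $I=\cI$) is terminating.

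For clause (b), closure under reduction, I take $T\in\cD$ with $T\a T'$ and check $T'\in D(\cI)$. Termination of $T'$ is inherited from $T\in\SN$. For the product condition, any reduction $T'\a^*\prod xAB$ yields $T\a T'\a^*\prod xAB$, hence $T\a^*\prod xAB$; since $T\in D(\cI)$ this gives $A\in\dom(\cI)$ and $B[x\to a]\in\dom(\cI)$ for every $a\in\cI(A)$, which is exactly what $T'\in D(\cI)$ demands.

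Clause (c) is the real content and the step I expect to be delicate. Given a neutral $t$ with $\red{t}\sle\cD$, termination of $t$ follows because any infinite reduction from $t$ would begin with a step into some $t'\in\red{t}\sle\cD\sle\SN$, a contradiction. The key observation is that a neutral term is never itself a product: each admissible shape $(\abs yBv)u\vv$, $x\vv$, $f\vv$ is an application or a variable, not a dependent product. Consequently any reduction $t\a^*\prod xAB$ must take at least one step and factors as $t\a t''\a^*\prod xAB$ with $t''\in\red{t}\sle\cD=D(\cI)$; the product conditions for $t$ are then inherited verbatim from $t''\in D(\cI)$, so $t\in D(\cI)=\cD$. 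The only point that genuinely needs care is this ``neutral terms are not products'' remark, since without it a neutral $t$ that happened to be a product could reach a product in zero steps and the factorization argument would fail.
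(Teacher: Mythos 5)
Your proof is correct and follows essentially the same route as the paper's: verify the three clauses directly, with clause (c) resting on the observation that a neutral term is never syntactically a product, so any reduction to a product must pass through an immediate reduct in $\cD$. The paper leaves this last observation implicit ("Since $T$ is neutral, there is $U\in\red{T}$ such that $U\a^*\prod xAB$"), whereas you spell it out, as well as the fixpoint identity $\dom(\cI)=D(\cI)=\cD$; both are accurate elaborations, not deviations.
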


\begin{proof}
Note that $\cD=D(\cI)$.
\begin{enumerate}
\item $\cD\sle\SN$ by definition of $D$.
\item Let $T\in\cD$ and $T'$ such that $T\a T'$. We have
  $T'\in\SN$ since $T\in\SN$. Assume now that $T'\a^*(x:A)B$. Then,
  $T\a^*(x:A)B$, $A\in\cD$ and, for all $a\in \cI(A)$,
  $B[x\to a]\in\cD$. Therefore, $T'\in\cD$.
\item Let $T$ be a neutral term such that $\red{T}\sle\cD$. Since
  $\cD\sle\SN$, $T\in\SN$. Assume now that $T\a^*(x:A)B$. Since
  $T$ is neutral, there is $U\in{\red{T}}$ such that
  $U\a^*(x:A)B$. Therefore, $A\in\cD$ and, for all $a\in \cI(A)$,
  $B[x\to a]\in\cD$.\qedhere
\end{enumerate}
\end{proof}

\begin{lemma}\label{lem-comp-pred-prod}
  If $P\in\bP$ and, for all $a\in P$, $Q(a)\in\bP$, then $\product{a}{P}{Q(a)}\in\bP$.
\end{lemma}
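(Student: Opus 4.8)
The plan is to write $S=\product{a}{P}{Q(a)}$ and to check the three defining conditions (a)--(c) of $\bP$ directly. Conditions (a) and (b) are routine. For (a), recall that every element of $\bP$ is nonempty (it contains each variable), so I would fix some $a_0\in P$; then for $t\in S$ we have $ta_0\in Q(a_0)\sle\SN$, and since any infinite reduction of $t$ would induce one of $ta_0$ by context closure, $t\in\SN$. For (b), if $t\in S$ and $t\a t'$, then for every $a\in P$ we have $ta\in Q(a)$ and $ta\a t'a$, so $t'a\in Q(a)$ because $Q(a)$ is closed under reduction; hence $t'\in S$.

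The only delicate point is condition (c): given a neutral $t$ with $\red t\sle S$, I must show $ta\in Q(a)$ for every $a\in P$. Here the direct induction on the argument breaks, because a reduct $ta'$ of $ta$ would naturally land in $Q(a')$ rather than in the required $Q(a)$, and $Q$ is not assumed monotone. The fix is to fix $a$ once and for all and to prove the stronger statement that $tc\in Q(a)$ for every $c$ with $a\a^* c$ (the case $c=a$ being the goal). I would prove this by well-founded induction on $c$, which is legitimate since $c\in P\sle\SN$ (using that $P$ is reduction-closed). As $tc$ is neutral, condition (c) for the fixed predicate $Q(a)$ reduces the goal to $\red{tc}\sle Q(a)$. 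A reduct of $tc$ has the form $t'c$ with $t\a t'$, or $tc'$ with $c\a c'$. In the first case $t'\in S$ gives $t'a\in Q(a)$, and from $a\a^* c$ we get $t'a\a^* t'c$, so $t'c\in Q(a)$ by reduction-closure of $Q(a)$. In the second case $a\a^* c'$ with $c'$ smaller, so $tc'\in Q(a)$ by the induction hypothesis. Both kinds of reduct lie in $Q(a)$, so $tc\in Q(a)$.

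The crux --- and the reason the monotonicity hypothesis of Lemma~\ref{lem-props}\ref{lem-comp-abs} is not needed here --- is that throughout this induction the target stays the single predicate $Q(a)$: I never pass to $Q(c)$ or $Q(a')$. Membership is transported from $t'a$ to $t'c$ using only closure of $Q(a)$ under reduction, while the reductions occurring in the argument are absorbed by the induction on $c$ rather than by comparing different instances of $Q$. I expect the only real subtlety to be getting this generalization right, namely quantifying over all reducts $c$ of the fixed $a$; once it is in place, the remaining steps are the standard candidate bookkeeping already used for $\SN$ and in Lemma~\ref{lem-props}.
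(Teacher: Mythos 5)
Your proof is correct and follows essentially the same route as the paper: conditions (a) and (b) are checked directly (the paper instantiates your $a_0$ with a variable $x\in P$), and for (c) the paper uses exactly your generalization, proving $tb\in Q(a)$ for all $b\in{\a^*\!(a)}$ by well-founded induction on $b$, with reducts $t'b$ handled via $t'a\in Q(a)$ and closure of the fixed predicate $Q(a)$ under reduction.
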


\begin{proof}
Let $R=\product{a}{P}{Q(a)}$.
\begin{enumerate}
\item Let $t\in R$. We have to prove that $t\in\SN$. Let
  $x\in\bV$. Since $P\in\bP$, $x\in P$. So, $tx\in Q(x)$. Since
  $Q(x)\in\bP$, $Q(x)\sle\SN$. Therefore, $tx\in\SN$, and $t\in\SN$.
\item Let $t\in R$ and $t'$ such that $t\a t'$. We have to prove that
  $t'\in R$. Let $a\in P$. We have to prove that $t'a\in Q(a)$. By
  definition, $ta\in Q(a)$ and $ta\a t'a$. Since $Q(a)\in\bP$, $t'a\in
  Q(a)$.
\item Let $t$ be a neutral term such that $\red{t}\sle R$. We have to
  prove that $t\in R$. Hence, we take $a\in P$ and prove that $ta\in
  Q(a)$. Since $P\in\bP$, we have $a\in\SN$ and ${\a^*\!\!(a)}\sle{P}$. We
  now prove that, for all $b\in{\a^*\!\!(a)}$, $tb\in Q(a)$, by induction on
  $\a$. Since $t$ is neutral, $tb$ is neutral too and it suffices to
  prove that $\red{tb}\sle Q(a)$. Since $t$ is neutral,
  ${\red{tb}}={\red{t}b\cup t\red{b}}$. By induction hypothesis,
  $t\red{b}\sle Q(a)$. By assumption, $\red{t}\sle R$. So,
  $\red{t}a\sle Q(a)$. Since $Q(a)\in\bP$, $\red{t}b\sle Q(a)$
  too. Therefore, $ta\in Q(a)$ and $t\in R$.\qedhere
\end{enumerate}
\end{proof}

\begin{lemma}\label{lem-comp-pred-type-int}
  For all $T\in\cD$, $\cI(T)$ is a computability predicate.
\end{lemma}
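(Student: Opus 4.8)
The plan is to prove the statement \emph{not} by structural induction on $T$ (which fails, since when $\nf{T}=\prod xAB$ the term $B[x\to a]$ is in general not a subterm of $T$), but by exploiting that $\cI$ is a \emph{least} fixpoint. Concretely, I would consider the set
\[X=\{I\in\bI\mid \all T\in\dom(I),\ I(T)\in\bP\}\]
and show that $\cI\in X$; since $\cD=\dom(\cI)$, this is exactly the claim. Because $F$ is monotone (Lemma \ref{lem-F-mon}) and $\bI$ is directed-complete, Markowsky's theorem \cite{markowsky76au} presents $\cI$ as the supremum of the increasing family of iterates of $F$ starting from the empty partial function $\bot$. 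It therefore suffices to check that $\bot\in X$, that $X$ is closed under the directed suprema used to build $\cI$, and that $F(X)\sle X$.

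The first two points are routine. Membership $\bot\in X$ is vacuous, since $\bot$ has empty domain. For closure under directed suprema, recall that suprema in $\bI$ are unions of graphs: if $T\in\dom(I)$ for $I$ the sup of a directed family of elements of $X$, then $T\in\dom(I_0)$ with $I(T)=I_0(T)$ for some member $I_0$ of the family, whence $I(T)=I_0(T)\in\bP$.

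The heart of the argument is the stability condition $F(X)\sle X$. I would fix $I\in X$ and $T\in\dom(F(I))=D(I)$ and show $F(I)(T)\in\bP$, following the two clauses defining $F$. If $\nf{T}$ is not a product, then $F(I)(T)=\SN$, which is a computability predicate. If $\nf{T}=\prod xAB$, then $F(I)(T)=\product{a}{I(A)}{I(B[x\to a])}$, and here the definition of $D(I)$ is precisely what breaks the apparent circularity: from $T\a^*\prod xAB$ and $T\in D(I)$ one gets $A\in\dom(I)$ and $B[x\to a]\in\dom(I)$ for every $a\in I(A)$. Since $I\in X$, this yields $I(A)\in\bP$ and $I(B[x\to a])\in\bP$, and Lemma \ref{lem-comp-pred-prod} then delivers $\product{a}{I(A)}{I(B[x\to a])}\in\bP$.

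The main obstacle is exactly the dependent component $B[x\to a]$, which no induction on the size or structure of the type can dominate, as substitution may enlarge it. The fixpoint formulation sidesteps this because membership of $B[x\to a]$ in $\dom(I)$ is guaranteed directly by the definition of $D(I)$, so no separate well-foundedness argument on substituted types is required. The only genuine inputs are that $\SN\in\bP$ and that indexed products of computability predicates are again computability predicates (Lemma \ref{lem-comp-pred-prod}), both already established.
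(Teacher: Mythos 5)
Your proof is correct and takes essentially the same route as the paper's: your set $X$ is exactly $\cF_p(\bT,\bP)$, and the paper's proof likewise observes that this chain-complete sub-poset is closed under $F$ (with the same two-case analysis, using $\SN\in\bP$ and Lemma \ref{lem-comp-pred-prod} for the product case), so that the least fixpoint $\cI$ lies in it. The only difference is presentational: you spell out the fixpoint-iteration details ($\bot\in X$ and closure under directed suprema, via Markowsky's theorem) that the paper compresses into the single remark that it suffices to show $\cF_p(\bT,\bP)$ is closed by $F$.
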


\begin{proof}
  Since $\cF_p(\bT,\bP)$ is a chain-complete poset, it suffices to
  prove that $\cF_p(\bT,\bP)$ is closed by $F$. Assume that
  $I\in\cF_p(\bT,\bP)$. We have to prove that $F(I)\in\cF_p(\bT,\bP)$,
  that is, for all $T\in D(I)$, $F(I)(T)\in\bP$. There are two
  cases:
  \begin{itemize}
  \item If ${\nf{T}}={(x:A)B}$,
    then $F(I)(T)=\product{a}{I(A)}{I(B[x\to a])}$.
    By assumption, $I(A)\in\bP$ and, for $a\in I(A)$,
    $I(B[x\to a])\in\bP$. Hence, by Lemma \ref{lem-comp-pred-prod},
    $F(I)(T)\in\bP$.
  \item Otherwise, $F(I)(T)=\SN\in\bP$.\qedhere
  \end{itemize}
\end{proof}

\noindent{\textcolor{darkgray}{$\blacktriangleright$}}
{\bf Lemma \ref{lem-props}\ref{lem-comp-pred-int}.}
{\it For all terms $T$ and substitutions $\s$, $\I{T}_\s\in\bP$.}

\begin{proof}
  By induction on $T$. If $T=s$, then $\I{T}_\s=\cD\in\bP$ by
  Lemma \ref{lem-comp-pred-dom}. If $T=(x:A)K\in\bK$, then
  $\I{T}_\s=\product{a}{\I{A}_\s}{\I{K}_{[x\to a,\s]}}$. By induction
  hypothesis, $\I{A}_\s\in\bP$ and, for all $a\in\I{A}_\s$,
  $\I{K}_{[x\to a,\s]}\in\bP$. Hence, by Lemma
  \ref{lem-comp-pred-prod}, $\I{T}_\s\in\bP$. If
  $T\notin\bK\cup\{\kind\}$ and $T\s\in\cD$, then
  $\I{T}_\s=\cI(T\s)\in\bP$ by Lemma
  \ref{lem-comp-pred-type-int}. Otherwise, $\I{T}_\s=\SN\in\bP$.
\end{proof}

\subsection{Invariance by reduction}

We now prove that the interpretation is invariant by reduction.

\begin{lemma}\label{lem-type-int-red}
  If $T\in\cD$ and $T\a T'$, then $\cI(T)=\cI(T')$.
\end{lemma}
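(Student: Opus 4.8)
The plan is to reduce the statement to one simple observation: the value $F(I)(U)$ assigned by the monotone map $F$ depends on $U$ only through its normal form $\nf U$. First I would check that both sides of the desired equality are meaningful, i.e. that $T'\in\cD$. This follows from Lemma \ref{lem-comp-pred-dom}: $\cD$ is a computability predicate, hence closed under reduction (property (b)), so $T\in\cD$ and $T\a T'$ give $T'\in\cD$. Thus $\cI(T)$ and $\cI(T')$ are both defined.

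Next I would record that $T$ and $T'$ share the same normal form. Since $\cD\sle\SN$ we have $T,T'\in\SN$, and by the local confluence assumption every terminating term has a unique normal form; because $T\a T'\a^*\nf{T'}$, the term $\nf{T'}$ is a normal form of $T$, so by uniqueness $\nf T=\nf{T'}$.

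Finally I would invoke the fixpoint equation $\cI=F(\cI)$, which gives $\cI(T)=F(\cI)(T)$ and $\cI(T')=F(\cI)(T')$. By the definition of $F$, for any $U\in\cD$ the set $F(\cI)(U)$ equals $\SN$ when $\nf U$ is not a product, and equals $\product{a}{\cI(A)}{\cI(B[x\to a])}$ when $\nf U=\prod xAB$; in either case it is determined entirely by $\nf U$. Since $\nf T=\nf{T'}$, the two terms land in the same clause with identical components, so $F(\cI)(T)=F(\cI)(T')$, whence $\cI(T)=\cI(T')$.

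There is no genuine obstacle here: the only two points that need care are verifying $T'\in\cD$ (so that $\cI(T')$ is defined at all) and making explicit that the recursive definition of $F$ factors through the normal form. Everything else is immediate because $\cI$ never inspects a term beyond its normal form.
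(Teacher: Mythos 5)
Your proof is correct and follows essentially the same route as the paper's: establish $T'\in\cD$ via closure of $\cD$ under reduction, use local confluence (Newman's lemma) to get $\nf{T}=\nf{T'}$, and conclude because the fixpoint equation $\cI=F(\cI)$ makes $\cI(U)$ depend only on $\nf{U}$. The paper merely phrases the last step as an explicit case split on whether $\nf{T}$ is a product, whereas you abstract it as ``$F(\cI)$ factors through the normal form''; the content is identical.
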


\begin{proof}
  First note that $T'\in\cD$ since $\cD\in\bP$. Hence, $\cI(T')$ is well defined. Now, we have $T\in\SN$
  since $\cD\sle\SN$. So, $T'\in\SN$ and, by local confluence and Newman's lemma,
  $\nf{T}=\nf{T'}$. If $\nf{T}=(x:A)B$
  then $\cI(T)=\product{a}{\cI(A)}{\cI(B[x\to a])}=\cI(T')$.
  Otherwise, $\cI(T)=\SN=\cI(T')$.
\end{proof}

\noindent{\textcolor{darkgray}{$\blacktriangleright$}}
{\bf Lemma \ref{lem-props}\ref{lem-int-red}.}
{\it If $T$ is typable, $T\s\in\cD$ and $T\a T'$, then $\I{T}_\s=\I{T'}_\s$.}

\begin{proof}
  By assumption, there are $\G$ and $U$ such that $\G\th T:U$. Since
  $\a$ preserves typing, we also have $\G\th T':U$. So, $T\neq\kind$, and
  $T'\neq\kind$. Moreover, $T\in\bK$ iff $T'\in\bK$ since $\G\th T:\kind$ iff
  $T\in\bK$ and $T$ is typable. In addition, we have $T'\s\in\cD$
  since $T\s\in\cD$ and $\cD\in\bP$.

  We now prove the result, with $T\a^=T'$ instead of $T\a T'$, by
  induction on $T$. If $T\notin\bK$, then $T'\notin\bK$ and, since
  $T\s,T'\s\in\cD$, $\I{T}_\s=\cI(T\s)=\cI(T'\s)=\I{T'}_\s$ by Lemma
  \ref{lem-type-int-red}. If $T=\type$, then
  $\I{T}_\s=\cD=\I{T'}_\s$. Otherwise, $T=(x:A)K$ and
  $T'=(x:A')K'$ with $A\a^=A'$ and $K\a^=K'$. By inversion, we have
  $\G\th A:\type$, $\G\th A':\type$, $\G,x:A\th K:\kind$ and $\G,x:A'\th
  K':\kind$. So, by induction hypothesis, $\I{A}_\s=\I{A'}_\s$ and, for
  all $a\in\I{A}_\s$, $\I{K}_{\s'}=\I{K'}_{\s'}$, where $\s'=[x\to
    a,\s]$. Therefore, $\I{T}_\s=\I{T'}_\s$.
\end{proof}

\noindent{\textcolor{darkgray}{$\blacktriangleright$}}
{\bf Lemma \ref{lem-props}\ref{lem-int-red-subs}.}
{\it If $T$ is typable, $T\s\in\cD$ and $\s\a\s'$, then $\I{T}_\s=\I{T}_{\s'}$.}

\begin{proof}
  By induction on $T$.
  \begin{itemize}
  \item If $T\in\mb{S}$, then $\I{T}_\s=\cD=\I{T}_{\s'}$.
  \item If $T=(x:A)K$ and $K\in\bK$,
    then $\I{T}_\s=\product{a}{\I{A}_\s}{\I{K}_{[x\to a,\s]}}$
    and $\I{T}_{\s'}=\product{a}{\I{A}_{\s'}}{\I{K}_{[x\to a,\s']}}$.
    By induction hypothesis,
    $\I{A}_\s=\I{A}_{\s'}$ and, for all $a\in\I{A}_\s$, $\I{K}_{[x\to
        a,\s]}=\I{K}_{[x\to a,\s']}$. Therefore,
    $\I{T}_\s=\I{T}_{\s'}$.
  \item If $T\s\in\cD$, then $\I{T}_\s=\cI(T\s)$ and
    $\I{T}_{\s'}=\cI(T\s')$. Since $T\s\a^*T\s'$, by Lemma
    \ref{lem-props}\ref{lem-int-red}, $\cI(T\s)=\cI(T\s')$.
  \item Otherwise, $\I{T}_\s=\SN=\I{T}_{\s'}$.\qedhere
  \end{itemize}
\end{proof}

\subsection{Adequacy of the interpretation}

\noindent{\textcolor{darkgray}{$\blacktriangleright$}}
{\bf Lemma \ref{lem-props}\ref{lem-int-prod}.}
{\it If $(x:A)B$ is typable, $((x:A)B)\s\in\cD$ and
  $x\notin\dom(\s)\cup\FV(\s)$,
  then $\I{(x:A)B}_\s=\product{a}{\I{A}_\s}{\I{B}_{[x\to a,\s]}}$.}

\begin{proof}
  If $B$ is a kind, this is immediate. Otherwise, since
  $((x:A)B)\s\in\cD$, $\I{(x:A)B}_\s=\cI(((x:A)B)\s)$. Since
  $x\notin\dom(\s)\cup\FV(\s)$, we have $((x:A)B)\s=(x:A\s)B\s$. Since
  $(x:A\s)B\s\in\cD$ and $\cD\sle\SN$, we have
  $\I{(x:A)B}_\s=\product{a}{\cI(\nf{A\s})}{\cI((\nf{B\s})[x\to a])}$.

  Since $(x:A)B$ is typable, $A$ is of type $\type$ and
  $A\notin\bK\cup\{\kind\}$. Hence, $\I{A}_\s=\cI(A\s)$ and, by Lemma
  \ref{lem-type-int-red}, $\cI(A\s)=\cI(\nf{A\s})$.

  Since $(x:A)B$ is typable and not a kind, $B$ is of type $\type$ and
  $B\notin\bK\cup\{\kind\}$. Hence, $\I{B}_{[x\to a,\s]}=\cI(B[x\to
    a,\s])$. Since $x\notin\dom(\s)\cup\FV(\s)$, $B[x\to
    a,\s]=(B\s)[x\to a]$. Hence, $\I{B}_{[x\to a,\s]}=\cI((B\s)[x\to a])$ and,
  by Lemma \ref{lem-type-int-red},
  $\cI((B\s)[x\to a])=\cI((\nf{B\s})[x\to a])$.

  Therefore, $\I{(x:A)B}_\s=\product{a}{\I{A}_\s}{\I{B}_{[x\to a,\s]}}$.
\end{proof}

Note that, by iterating this lemma, we get that $v\in\I{\prod\vx\vT U}$
iff, for all $\vt$ such that $[\vx\to\vt]\models\vx:\vT$,
$v\vt\in\I{U}_{[\vx\to\vt]}$.

\bigskip
\noindent{\textcolor{darkgray}{$\blacktriangleright$}}
{\bf Lemma \ref{lem-props}\ref{lem-int-subs}.}
{\it If $\D\th U:s$, $\G\th\g:\D$ and $U\g\s\in\cD$, then
  $\I{U\g}_\s=\I{U}_{\g\s}$.}

\begin{proof}
  We proceed by induction on $U$. Since $\D\th U:s$ and $\G\th\g:\D$,
  we have $\G\th U\g:s$.
  \begin{itemize}
   \item If $s=\type$, then $U,U\g\notin\bK\cup\{\kind\}$
  and $\I{U\g}_\s=\cI(U\g\s)=\I{U}_{\g\s}$ since
  $U\g\s\in\cD$.
  \item Otherwise, $s=\kind$ and $U\in\bK$.
   \begin{itemize}
    \item If $U=\type$,
     then $\I{U\g}_\s=\cD=\I{U}_{\g\s}$.
    \item Otherwise, $U=(x:A)K$ and,
      by Lemma \ref{lem-props}\ref{lem-int-prod},
       $\I{U\g}_\s=\product{a}{\I{A\g}_\s}{\I{K\g}_{[x\to a,\s]}}$
       and $\I{U}_{\g\s}=\product{a}{\I{A}_{\g\s}}{\I{K}_{[x\to a,\g\s]}}$.
       By induction hypothesis,
       $\I{A\g}_\s=\I{A}_{\g\s}$ and, for all $a\in\I{A\g}_\s$,
       $\I{K\g}_{[x\to a,\s]}=\I{K}_{\g[x\to a,\s]}$. Wlog we can assume $x\notin\dom(\g)\cup\FV(\g)$. So, $\I{K}_{\g[x\to
       a,\s]}=\I{K}_{[x\to a,\g\s]}$.\qedhere
   \end{itemize}
  \end{itemize}
\end{proof}

\bigskip
\noindent{\textcolor{darkgray}{$\blacktriangleright$}}
{\bf Lemma \ref{lem-props}\ref{lem-comp-abs}.}
{\it Let $P$ be a computability predicate and $Q$ a $P$-indexed family of
  computability predicates such that $Q(a')\sle Q(a)$ whenever $a\a
  a'$. Then, $\l x:A.b\in\product{a}{P}{Q(a)}$ whenever $A\in\SN$ and, for
  all $a\in P$, $b[x\to a]\in Q(a)$.}

\begin{proof}
  Let $a_0\in P$. Since $P\in\bP$, we have $a_0\in\SN$ and $x\in P$. Since
  $Q(x)\in\bP$ and $b=b[x\to x]\in Q(x)$, we have $b\in\SN$.
  Let $a\in{\a^*(a_0)}$. We can prove
  that $(\abs xAb)a\in Q(a_0)$ by induction on $(A,b,a)$ ordered by
  $(\a,\a,\a)_{\mr{prod}}$. Since $Q(a_0)\in\bP$ and $(\abs xAb)a$ is
  neutral, it suffices to prove that $\red{(\abs xAb)a}\sle
  Q(a_0)$. If the reduction takes place in $A$, $b$ or $a$, we can
  conclude by induction hypothesis. Otherwise, $(\abs xAb)a\a b[x\to a]\in
  Q(a)$ by assumption. Since $a_0\a^*a$ and $Q(a')\sle Q(a)$ whenever
  $a\a a'$, we have $b[x\to a]\in Q(a_0)$.
\end{proof}

\section{Termination proof of Example \ref{expl-list-poly}}
\label{annex-matrices}

Here is the comprehensive list of dependency pairs in the example:
\begin{lstlisting}[escapeinside={*}{*}]
*\color{red} A:*               El (arrow a b) > El a
*\color{red} B:*               El (arrow a b) > El b
*\color{red} C:*                    (s p) + q > p + q
*\color{red} D:*   app a _ (cons _ x p l) q m > p + q
*\color{red} E:*   app a _ (cons _ x p l) q m > app a p l q m
*\color{red} F:*len_fil a f _ (cons _ x p l)  > len_fil_aux (f x) a f p l
*\color{red} G:*len_fil a f _ (app _ p l q m) >
                        (len_fil a f p l) + (len_fil a f q m)
*\color{red} H:*len_fil a f _ (app _ p l q m) > len_fil a f p l
*\color{red} I:*len_fil a f _ (app _ p l q m) > len_fil a f q m
*\color{red} J:*    len_fil_aux true  a f p l > len_fil a f p l
*\color{red} K:*    len_fil_aux false a f p l > len_fil a f p l
*\color{red} L:*    fil a f _ (cons _ x p l)  > fil_aux (f x) a f x p l
*\color{red} M:*    fil a f _ (app _ p l q m) >
                        app a (len_fil a f p l) (fil a f p l)
                              (len_fil a f q m) (fil a f q m)
*\color{red} N:*    fil a f _ (app _ p l q m) > len_fil a f p l
*\color{red} O:*    fil a f _ (app _ p l q m) > fil a f p l
*\color{red} P:*    fil a f _ (app _ p l q m) > len_fil a f q m
*\color{red} Q:*    fil a f _ (app _ p l q m) > fil a f q m
*\color{red} R:*      fil_aux true  a f x p l > len_fil a f p l
*\color{red} S:*      fil_aux true  a f x p l > fil a f p l
*\color{red} T:*      fil_aux false a f x p l > fil a f p l
\end{lstlisting}

The whole callgraph is depicted below.
The letter associated to each matrix corresponds to the
dependency pair presented above and in example \ref{expl-mult}, except for TC 's which comes from the computation
of the transitive closure and labels dotted edges.

\begin{center}\tt
  \begin{tikzpicture}
    \node[draw] (filter) at (2,0) {fil};
    \node[draw] (f_aux) at (-1,0) {fil\_aux};
    \node[draw] (l_filter) at (2,-2) {len\_fil};
    \node[draw] (l_f_aux) at (-3,-2) {len\_fil\_aux};
    \node[draw] (el) at (-3.5,0) {El};
    \node[draw] (app) at (5,0) {app};
    \node[draw] (pl) at (5,-2) {+};
    \draw[>=latex,->] (el) to[out=135,in=90] (-4.5,0) to[out=-90,in=-135] (el);
    \node[left] (ell) at (-4.5,0) {A,B};
    \draw[>=latex,->] (pl) to[out=45,in=90] (6,-2) to[out=-90,in=-45] (pl);
    \node[right] (pll) at (6,-2) {C};
    \draw[>=latex,->] (app) to[bend left=8] node[midway,right] {D} (pl);
    \draw[>=latex,->] (app) to[out=45,in=90] (6.2,0) to[out=-90,in=-45] (app);
    \node[right] (appl) at (6.2,0) {E};
    \draw[>=latex,->] (l_filter) to[bend left=8] node[midway,above] {F} (l_f_aux);
    \draw[>=latex,->] (l_filter) to[bend left=8] node[midway,above] {G} (pl);
    \draw[>=latex,->] (1.5,-2.27) to[out=-45,in=0] (1.35,-2.8) to[out=180,in=-150] (1.23,-2.21);
    \node[below] (l_filterl) at (1.35,-2.8) {H,I};
    \draw[>=latex,->] (l_f_aux) to[bend left=8] node[midway,above] {J,K} (l_filter);
    \draw[>=latex,->] (filter) to[bend left=10] node[midway,above] {L} (f_aux);
    \draw[>=latex,->] (f_aux) to[bend right=10] node[midway,above,right] {R} (l_filter);
    \draw[>=latex,->] (f_aux) to[bend left=10] node[midway,above] {S,T} (filter);
    \draw[>=latex,->] (filter) to[bend left=10] node[midway,above] {M} (app);
    \draw[>=latex,->] (filter) to[bend right=10] node[midway,right] {N,P} (l_filter);
    \draw[>=latex,->] (1.75,0.23) to[out=45,in=0] (1.6,0.8) to[out=180,in=150] (1.6,0.19);
    \node[above] (filterl) at (1.6,0.8) {O,Q};
    \draw[>=latex,->,densely dashed] (f_aux) to[out=45,in=0] (-1,0.8) to[out=180,in=135] (f_aux);
    \node[above] (f_auxTC) at (-1,0.8) {TC${}_4$};
    \draw[>=latex,->,densely dashed] (2.4,0.19) to[out=30,in=0] (2.4,0.8) to[out=180,in=135] (2.25,0.23);
    \node[above] (filterTC) at (2.4,0.8) {TC${}_3$};
    \draw[>=latex,->,densely dashed] (l_f_aux) to[out=-45,in=0] (-3,-2.8) to[out=180,in=-135] (l_f_aux);
    \node[below] (l_f_auxTC) at (-3,-2.8) {TC${}_1$};
    \draw[>=latex,->,densely dashed] (2.77,-2.21) to[out=-30,in=0] (2.7,-2.8) to[out=180,in=-135] (2.55,-2.27);
    \node[below] (l_filterTC) at (2.7,-2.8) {TC${}_2$};
  \end{tikzpicture}
\end{center}


The argument {\tt a} is omitted everywhere on the matrices presented below:

\begin{center}\small
{\tt A,B}=$\left(\begin{smallmatrix}
          -1\\
         \end{smallmatrix}\right)$,
{\tt C}=$\left(\begin{smallmatrix}
          -1     & \infty \\
          \infty & 0      \\
         \end{smallmatrix}\right)$,
{\tt D}=$\left(\begin{smallmatrix}
          \infty & \infty\\
          -1     & \infty\\
          \infty & 0     \\
          \infty & \infty\\
         \end{smallmatrix}\right)$,
{\tt E}=$\left(\begin{smallmatrix}
          \infty & \infty & \infty & \infty\\
          -1     & -1     & \infty & \infty\\
          \infty & \infty & 0      & \infty\\
          \infty & \infty & \infty & 0     \\
         \end{smallmatrix}\right)$,
{\tt F}=$\left(\begin{smallmatrix}
          \infty & 0      & \infty & \infty\\
          \infty & \infty & \infty & \infty\\
          \infty & \infty & -1     & -1    \\
         \end{smallmatrix}\right)$,
{\tt J}={\tt K}=$\left(\begin{smallmatrix}
          \infty & \infty & \infty\\
          0      & \infty & \infty\\
          \infty & 0      & \infty\\
          \infty & \infty & 0     \\
         \end{smallmatrix}\right)$,
{\tt G}=$\left(\begin{smallmatrix}
          \infty & \infty\\
          \infty & \infty\\
          \infty & \infty\\
         \end{smallmatrix}\right)$,
{\tt H}={\tt I}={\tt N}={\tt O}={\tt P}={\tt Q}=$\left(\begin{smallmatrix}
          0      & \infty & \infty\\
          \infty & \infty & \infty\\
          \infty & -1     & -1    \\
         \end{smallmatrix}\right)$,
{\tt L}=$\left(\begin{smallmatrix}
          \infty & 0      & \infty & \infty & \infty\\
          \infty & \infty & \infty & \infty & \infty\\
          \infty & \infty & -1     & -1     & -1    \\
         \end{smallmatrix}\right)$,
{\tt M}=$\left(\begin{smallmatrix}
          \infty & \infty & \infty & \infty\\
          \infty & \infty & \infty & \infty\\
          \infty & \infty & \infty & \infty\\
         \end{smallmatrix}\right)$,
{\tt R}={\tt S}={\tt T}=$\left(\begin{smallmatrix}
          \infty & \infty & \infty\\
          0      & \infty & \infty\\
          \infty & \infty & \infty\\
          \infty & 0      & \infty\\
          \infty & \infty & 0     \\
         \end{smallmatrix}\right)$.
\end{center}
Which leads to the matrices labeling a loop in the transitive closure:
\begin{center}
{\tt TC}${}_1$={\tt J}$\times${\tt F}=$\left(\begin{smallmatrix}
                                                    \infty & \infty & \infty & \infty\\
                                                    \infty & 0      & \infty & \infty\\
                                                    \infty & \infty & \infty & \infty\\
                                                    \infty & \infty & -1     & -1    \\
                                                   \end{smallmatrix}\right)$,
{\tt TC}${}_4$={\tt S}$\times${\tt L}=$\left(\begin{smallmatrix}
                                                    \infty & \infty & \infty & \infty & \infty\\
                                                    \infty & 0      & \infty & \infty & \infty\\
                                                    \infty & \infty & \infty & \infty & \infty\\
                                                    \infty & \infty & \infty & \infty & \infty\\
                                                    \infty & \infty & -1     & -1     & -1    \\
                                                   \end{smallmatrix}\right)$,
{\tt TC}${}_3$={\tt L}$\times${\tt S}={\tt TC}${}_2$={\tt F}$\times${\tt J}=$\left(\begin{smallmatrix}
                                                    0      & \infty & \infty\\
                                                    \infty & \infty & \infty\\
                                                    \infty & -1     & -1    \\
                                                   \end{smallmatrix}\right)$={\tt O}={\tt H}.
\end{center}
It would be useless to compute matrices labeling edges which are not in a strongly connected component of the call-graph (like {\tt S$\times$R}),
but it is necessary to compute all the products which could label a loop,
especially to verify that all loop-labeling matrices are idempotent,
which is indeed the case here.

\hide{
As an example, we can detail the calculus performed to compute {\tt TC${}_3^2$}:
\[\left(\begin{smallmatrix}
  \min(0+0,\infty+\infty,\infty+\infty) &
    \min(0+\infty,\infty+\infty,\infty+-1) &
      \min(0+\infty,\infty+\infty,\infty+-1)\\
  \min(\infty+0,\infty+\infty,\infty+\infty) &
    \min(\infty+\infty,\infty+\infty,\infty+-1) &
      \min(\infty+\infty,\infty+\infty,\infty+-1)\\
  \min(\infty+0,-1+\infty,-1+\infty) &
    \min(\infty+\infty,-1+\infty,-1+-1) &
      \min(\infty+\infty,-1+\infty,-1+-1)\\
 \end{smallmatrix}\right)\]
}

We now check that this system is well-structured. For each rule
$f\vl\a r$, we take the environment $\D_{f\vl\a r}$ made of all
the variables of $r$ with the following types:
{\tt a:Set, b:Set, p:$\bN$, q:$\bN$, x:El a, l:$\bL$ a p, m:$\bL$ a q, f:El a$\A\bB$}.

The precedence infered for this example is the smallest containing:
\begin{itemize}
\item comparisons linked to the typing of symbols:
{\begin{center}\tt
\begin{tabular}{rlrl}
Set &$\preceq$ arrow &
Set,$\bL$,0 &$\preceq$ nil\\
Set &$\preceq$ El&
Set,El,$\bN$,$\bL$,s &$\preceq$ cons\\
$\bB$ &$\preceq$ true &
Set,$\bN$,$\bL$,+ &$\preceq$ app\\
$\bB$ &$\preceq$ false &
Set,El,$\bB$,$\bN$,$\bL$ &$\preceq$ len\_fil\\
$\bN$ &$\preceq$ 0 &
$\bB$,Set,El,$\bN$,$\bL$ &$\preceq$ len\_fil\_aux\\
$\bN$ &$\preceq$ s&
Set,El,$\bB$,$\bN$,$\bL$,len\_fil &$\preceq$ fil\\
$\bN$ &$\preceq$ + &
$\bB$,Set,El,$\bN$,$\bL$,len\_fil\_aux &$\preceq$ fil\_aux \\
Set,$\bN$ &$\preceq$ $\bL$&\\
\end{tabular}
\end{center}}
\item and comparisons related to calls:
{\begin{center}\tt
\begin{tabular}{rlrl}
s &$\preceq$ + &
s,len\_fil &$\preceq$ len\_fil\_aux \\
cons,+ &$\preceq$ app &
nil,fil\_aux,app,len\_fil &$\preceq$ fil \\
0,len\_fil\_aux,+ &$\preceq$ len\_fil &
fil,cons,len\_fil &$\preceq$ fil\_aux\\
\end{tabular}
\end{center}}
\end{itemize}

This precedence can be sum up in the following diagram, where symbols in the same box are equivalent:
\begin{center}\tt
  \begin{tikzpicture}
    \node[draw] (filter) at (2,6){fil,fil\_aux};
    \node[draw] (l_filter) at (3,4.5) {len\_fil,len\_fil\_aux};
    \node[draw] (append) at (-1,4.5) {app};
    \node[draw] (true) at (4,1.5) {true};
    \node[draw] (false) at (6,1.5) {false};
    \node[draw] (B) at (5,0) {$\bB$};
    \node[draw] (cons) at (-2,3) {cons};
    \node[draw] (nil) at (0,3) {nil};
    \node[draw] (plus) at (3,3) {+};
    \node[draw] (L) at (-1,1.5) {$\bL$};
    \node[draw] (arrow) at (-4,1.5) {arrow};
    \node[draw] (El) at (-3,1.5) {El};
    \node[draw] (zero) at (1,1.5) {0};
    \node[draw] (succ) at (3,1.5) {s};
    \node[draw] (Set) at (-3.5,0) {Set};
    \node[draw] (N) at (2,0) {$\bN$};
    \draw[>=latex,->] (Set) to (arrow);
    \draw[>=latex,->] (Set) to (El);
    \draw[>=latex,->] (Set) to (L);
    \draw[>=latex,->] (N) to (L);
    \draw[>=latex,->] (N) to (zero);
    \draw[>=latex,->] (N) to (succ);
    \draw[>=latex,->] (B) to (true);
    \draw[>=latex,->] (B) to (false);
    \draw[>=latex,->] (El) to (cons);
    \draw[>=latex,->] (L) to (cons);
    \draw[>=latex,->] (L) to (nil);
    \draw[>=latex,->] (zero) to (nil);
    \draw[>=latex,->] (succ) to (plus);
    \draw[>=latex,->] (cons) to (append);
    \draw[>=latex,->] (plus) to (append);
    \draw[>=latex,->] (plus) to (l_filter);
    \draw[>=latex,->] (append) to (filter);
    \draw[>=latex,->] (l_filter) to (filter);
    \draw[>=latex,->] (zero) to (l_filter);
    \draw[>=latex,->] (L) to[bend right=5] (l_filter);
    \draw[>=latex,->] (B) to[bend right=35] (l_filter);
    \draw[>=latex,->] (nil) to[bend left=15] (filter);
    \draw[>=latex,->] (succ) to (cons);
    \draw[>=latex,->] (El) to[out=15,in=-175] (l_filter);
  \end{tikzpicture}
\end{center}


\end{document}